%File: anonymous-submission-latex-2024.tex
\documentclass[letterpaper]{article} % DO NOT CHANGE THIS
\usepackage{aaai24}  % DO NOT CHANGE THIS
\usepackage{times}  % DO NOT CHANGE THIS
\usepackage{helvet}  % DO NOT CHANGE THIS
\usepackage{courier}  % DO NOT CHANGE THIS
\usepackage[hyphens]{url}  % DO NOT CHANGE THIS
\usepackage{graphicx} % DO NOT CHANGE THIS
\urlstyle{rm} % DO NOT CHANGE THIS
  % DO NOT CHANGE THIS
\usepackage{natbib}  % DO NOT CHANGE THIS AND DO NOT ADD ANY OPTIONS TO IT
\usepackage{caption} % DO NOT CHANGE THIS AND DO NOT ADD ANY OPTIONS TO IT
\frenchspacing  % DO NOT CHANGE THIS
\setlength{\pdfpagewidth}{8.5in} % DO NOT CHANGE THIS
\setlength{\pdfpageheight}{11in} % DO NOT CHANGE THIS
%
% These are recommended to typeset algorithms but not required. See the subsubsection on algorithms. Remove them if you don't have algorithms in your paper.
\usepackage{algorithm}
\usepackage{algorithmic}

%
% These are are recommended to typeset listings but not required. See the subsubsection on listing. Remove this block if you don't have listings in your paper.
\usepackage{newfloat}
\usepackage{listings}
\DeclareCaptionStyle{ruled}{labelfont=normalfont,labelsep=colon,strut=off} % DO NOT CHANGE THIS
\lstset{%
	basicstyle={\footnotesize\ttfamily},% footnotesize acceptable for monospace
	numbers=left,numberstyle=\footnotesize,xleftmargin=2em,% show line numbers, remove this entire line if you don't want the numbers.
	aboveskip=0pt,belowskip=0pt,%
	showstringspaces=false,tabsize=2,breaklines=true}
\floatstyle{ruled}
\newfloat{listing}{tb}{lst}{}
\floatname{listing}{Listing}
%
% Keep the \pdfinfo as shown here. There's no need
% for you to add the /Title and /Author tags.
\pdfinfo{
/TemplateVersion (2024.1)
}

\usepackage{soul}  
\usepackage[utf8]{inputenc} 
\usepackage{amsmath}
\usepackage{amssymb}
\usepackage{amsthm}
\usepackage{booktabs}  
\usepackage[usenames,dvipsnames]{xcolor} 
\usepackage{macros} 
\usepackage{tikz}
\usetikzlibrary{arrows,positioning} 
\usepackage{multirow}

\newtheorem{example}{Example}
\newtheorem{theorem}{Theorem}
%
% These are are recommended to typeset listings but not required. See the subsubsection on listing. Remove this block if you don't have listings in your paper.
\usepackage{newfloat}
\usepackage{listings}
\lstset{%
	basicstyle={\footnotesize\ttfamily},% footnotesize acceptable for monospace
	numbers=left,numberstyle=\footnotesize,xleftmargin=2em,% show line numbers, remove this entire line if you don't want the numbers.
	aboveskip=0pt,belowskip=0pt,%
	showstringspaces=false,tabsize=2,breaklines=true}
\floatstyle{ruled}
\newfloat{listing}{tb}{lst}{}
\floatname{listing}{Listing}
%
%\nocopyright
%
% PDF Info Is REQUIRED.
% For /Title, write your title in Mixed Case.
% Don't use accents or commands. Retain the parentheses.
% For /Author, add all authors within the parentheses,
% separated by commas. No accents, special characters
% or commands are allowed.
% Keep the /TemplateVersion tag as is
\pdfinfo{
/Title ()
/Author ()
/TemplateVersion (2022.1)
}

\setcounter{secnumdepth}{0} %May be changed to 1 or 2 if section numbers are desired.

% The file aaai22.sty is the style file for AAAI Press
% proceedings, working notes, and technical reports.
%

% Title

% Your title must be in mixed case, not sentence case.
% That means all verbs (including short verbs like be, is, using,and go),
% nouns, adverbs, adjectives should be capitalized, including both words in hyphenated terms, while
% articles, conjunctions, and prepositions are lower case unless they
% directly follow a colon or long dash
\title{Natural Strategic Ability in Stochastic Multi-Agent  Systems \footnote{This is an extended version of the same title paper that appeared at AAAI 2024, containing a technical appendix with proof details. In this version, we also correct the proof of Theorem \ref{prop:mcheck-det-patl}. We would like to thank Wojtek Jamroga for spotting the problem
and alerting us. }}

\author {
    % Authors
    Raphaël Berthon\textsuperscript{\rm 1}, 
    Joost-Pieter Katoen\textsuperscript{\rm 1},
    Munyque Mittelmann\textsuperscript{\rm 2},
    Aniello Murano\textsuperscript{\rm 2}
}
\affiliations {
    % Affiliations
    \textsuperscript{\rm 1} RWTH Aachen University, Germany\\
    \textsuperscript{\rm 2} University of Naples Federico II, Italy\\
    \{berthon,katoen\}@cs.rwth-aachen.de, \{munyque.mittelmann, aniello.murano\}@unina.it
}

\iffalse
%Example, Multiple Authors, ->> remove \iffalse,\fi and place them surrounding AAAI title to use it
\title{My Publication Title --- Multiple Authors}
\author {
    % Authors
    First Author Name,\textsuperscript{\rm 1}
    Second Author Name, \textsuperscript{\rm 2}
    Third Author Name \textsuperscript{\rm 1}
}
\affiliations {
    % Affiliations
    \textsuperscript{\rm 1} Affiliation 1\\
    \textsuperscript{\rm 2} Affiliation 2\\
    firstAuthor@affiliation1.com, secondAuthor@affilation2.com, thirdAuthor@affiliation1.com
}
\fi

\begin{document}

\maketitle

\begin{abstract}
Strategies synthesized using formal methods %\PATL or \PATLs,
can be complex and often require infinite memory, 
which does not correspond to the expected behavior when trying to model Multi-Agent Systems (MAS). To capture such behaviors,
natural strategies are a recently proposed framework striking a balance between the ability of agents to strategize with memory and the model-checking complexity, but until now has been restricted to fully deterministic settings. 
For the first time, we consider the probabilistic temporal logics %Probabilistic Alternating-time Temporal Logics 
\PATL and \PATLs under natural strategies (Nat\PATL and Nat\PATLs, resp.).
As main result we show that, in stochastic MAS, Nat\PATL model-checking is $\Delta_2^P$-complete when the active coalition is restricted to deterministic strategies. We also give a \DNEXPTIME complexity result for Nat\PATLs with the same restriction. In the unrestricted case, we give an \EXPSPACE complexity for Nat\PATL and \TEXPSPACE complexity for Nat\PATLs.

% 
%We consider      Nat\PATL and Nat\PATLs, a probabilistic logic for multi-agent  natural strategies, which are an encoding of conditional plans using regular expressions of bounded length.  We extend the previous proposal of natural strategies to enable probabilistic choice of actions.

%The logics considered enable reasoning about the strategic abilities of agents that play  strategies with bounded complexity in the probabilistic setting.  
%Our main contribution are the novel complexity results for  \PATL and \PATLs with  memoryless and bounded recall agents when considering both probabilistic and deterministic natural strategies. 
\end{abstract}

\section{Introduction}

In the last decade, much attention has been devoted to the verification of \textit{Multi-Agent Systems} (MAS).   
One of the most important early developments was the Alternating-time Temporal Logics \ATL and \ATLs \citep{AlurHK02}. %
Since its initial proposal, \ATL\ has been extended in various directions, considering, for instance, strategy contexts~\citep{DBLP:journals/iandc/LaroussinieM15} or adding imperfect information  and epistemic operators
~\citep{jamroga2011comparing}. 
Strategy Logic (\SL) \citep{ChatterjeeHP10,MMPV14} extends \ATL to treat  strategies as first-order variables. 
The probabilistic logics \PATL, \PATLs~\cite{chen2007probabilistic}, Stochastic Game Logic~\cite{DBLP:journals/acta/BaierBGK12}, and \PSL~\cite{aminof2019probabilistic}  enhances \ATL, \ATLs, \ATL with strategy contexts, and  \SL, resp., to the probabilistic setting. 
Those logics allow us to express that a coalition can enforce that the probability of satisfying their goal meets 
a specified constraint.

The importance of the aforementioned logics lies in the \textit{uncertainty} often faced by MAS, due to the occurrence of randomization, such as natural events and the behavior of their components (i.e., the agents). While those aspects cannot be known with certainty, they can be measured based on experiments or past observations. Examples include, among others, the affluence of users interacting with the system, unknown preference of its agents modeled with probabilistic distributions, and errors of its sensorial components.
All the aforementioned logics also have downsides, either complexity-wise or memory-wise. \PSL is undecidable, and is still \TEXPSPACE when restricted to  memoryless strategies. 
\PATL model checking is in \NPInter but requires infinite-memory strategies. Stochastic game logic is \PSPACE with memoryless deterministic strategies, and \EXPSPACE with memoryless probabilistic strategies. These last two results are of interest, but the memoryless assumption is quite restrictive.

\emph{Natural strategies}, first defined in~\cite{natStrategy}, are lists of condition-action pairs with a bounded memory representation. 
This definition contrasts with combinatorial  strategies (i.e., functions from histories to actions), considered typically in the semantics of logics for MAS, including \ATL\ and \ATLs. 
The motivation for natural strategies, as argued in \cite{natStrategy}, is that  combinatorial strategies are not   realistic in the context of human behavior, because of the difficulty to execute and design complex plans.
In particular, systems 
that are difficult to use are often ignored by the users, even if they respect design specifications such as security constraints. Artificial agents with limited memory or computational power cannot use combinatorial strategies either. On the other end of the spectrum, memoryless strategies that depend only on the current state cannot provide adequate solutions to many planning problems.

Natural strategies encompass both bounded memory and specifications of agents with “simple” strategies, by allowing agents to use some past observations without requiring infinite memory. They aim at capturing the intuitive approach a human would use when describing strategies. As a result, these strategies are easier to explain using natural language. They also intrinsically feature imperfect information, since they reason about the sequence of propositional variables observed in previous states, instead of the states themselves. Although the systems with whom these agents interact may be stochastic,  the study of natural strategies has been until now restricted to fully deterministic settings. \emph{For the first time, we consider  \PATL\ and \PATLs under natural strategies and  investigate their model checking problem for stochastic MAS.} Remarkably, the logics we consider can also be seen as an extension of POMDPS to a setting with multiple agents with bounded memory strategies~\cite{DBLP:conf/aaai/ChatterjeeCD16}. %relation to properties expressed in the variants of \PATL\ and \PATLs with natural strategies.

\paragraph{Contribution.}
In this paper, we propose variants of the probabilistic logics \PATL\ and \PATLs with natural strategies  (denoted Nat\PATL\ and Nat\PATLs, resp.) and study their complexity for model checking. We present complexity results for deterministic and, for the first time, \textit{probabilistic natural strategies}. With respect to the agents' memory, we investigate both the memoryless and bounded recall settings~\footnote{As usual, we denote no recall with r and recall with R. }. Table~\ref{tab:mcheck-complexity} summarizes the results of this paper. The main advantage of the logics proposed is that they enable to express and verify the strategic abilities  of stochastic MAS in which agents have limited memory and/or computational power, with a reasonably good model checking complexity. In particular, the model checking of Nat\PATL[R] is $\Delta_2^P$-complete for deterministic natural strategies, and in \EXPSPACE for probabilistic natural strategies. %We illustrate the  usefulness of this setting with meaningful examples given throughout the paper.

\begin{table}[] 
\centering
\begin{tabular}{llll}
\hline
 &  & Det.$\sim$Strategies & Prob.$\sim$Strategies \\ \hline
 & Nat\PATL[r] & $\Delta_2^P$-complete & \EXPSPACE \\
 & Nat\PATLs[r] & \DNEXPTIME & \TEXPSPACE \\
 & Nat\PATL[R]& $\Delta_2^P$-complete & \EXPSPACE \\  
 & Nat\PATLs[R] & \DNEXPTIME & \TEXPSPACE \\  \hline 
\end{tabular}
\caption{Summary of model checking complexity problems for Nat\PATL and Nat\PATLs with stochastic  MAS. 
}  
\label{tab:mcheck-complexity}
\vspace{-0.5cm}
\end{table}

\paragraph{Outline.} We start the paper by presenting related work and preliminary definitions. Then, we introduce behavioral natural strategies and the logics Nat\PATL and Nat\PATLs. Next, we discuss motivating examples. We proceed by presenting technical results on the model checking complexity and expressivity. Finally, we conclude the paper.

 \section{Related Work}

Several works consider the verification of  stochastic MAS with specifications given in probabilistic logics. In particular,   
\citet{huang2013logic} study an ATL-like logic for stochastic MAS when agents play deterministic strategies and have probabilistic knowledge. %
The model checking problem has been studied for Probabilistic Alternating-Time $\mu$-Calculus \citep{song2019probabilistic}.
\citet{Huang2012}  consider the logic Probabilistic \ATLs\ (\PATLs) under incomplete information and synchronous perfect recall. \PATL was  also  considered under imperfect information and memoryless strategies \cite{jamroga2023patlii}, and with 
 accumulated costs/rewards \cite{chen2013automatic}.

Also in the context of MAS, probabilistic logics  were  used for the verification of unbounded parameterized systems \citep{lomuscio2020parameterised}, resource-bounded systems \citep{nguyen2019probabilistic}, and under assumptions over opponents' strategies \citep{DBLP:journals/fuin/BullingJ09}.

Our work is also related to the research on representation of strategies with limited memory.
This includes  the  representation of finite-memory strategies by input/output automata~\cite{Vester13}, decision trees~\cite{DBLP:conf/cav/BrazdilCCFK15}, \ATL with bounded memory~\cite{Agotnes09bounded}, as well as the use of bounded memory as an approximation of perfect recall~\cite{BLM18}. More recently, \citet{DEUSER2020103399}
 represented  strategies as Mealy
machines and studied how bounded recall affects the agents’ abilities to execute plans. % 

Natural strategies have first been studied in~\cite{natStrategy} on multiple deterministic settings: finding winning strategies in concurrent games with \LTL specifications, deciding if a set of strategies defines a Nash equilibrium, and model checking \ATL. This last use of natural strategies has later been extended %
to \ATL with imperfect information 
 \cite{natStrategyII} and \SL\  \cite{DBLP:conf/atal/BelardinelliJMM22}.

The study of partially observable MDPs (POMDPs) also considers a variety of strategy representations, as discussed in~\cite{DBLP:journals/toct/VlassisLB12}. When allowing infinite-memory strategies, finding an almost-sure winning strategy with a B\"uchi or reachability objective requires exponential time on POMDPs, while finding strategies for almost-sure parity objectives~\cite{baier2008decision,chatterjee2010qualitative} and for maximizing a reachability objective~\cite{DBLP:journals/ai/MadaniHC03} is undecidable. 
However, when resticting the memory of the strategies to some fixed bound~\cite{DBLP:conf/nips/PajarinenP11,DBLP:conf/uai/Junges0WQWK018}, the complexity of threshold reachability becomes \ETR-complete (the existential theory of the reals) with probabilistic strategies and \NP-complete with deterministic strategies~\cite{DBLP:phd/dnb/Junges20}. The complexity of almost-sure reachability with bounded memory probabilistic strategies is also \NP-complete~\cite{DBLP:conf/aaai/ChatterjeeCD16}.

\section{Preliminaries}
\label{sec:preliminars}

In this paper, we fix finite non-empty sets of agents $\Ag$, actions $\Act$,
and atomic propositions $\APf$. 
We write $\profile{\act}$ for a tuple of actions $(\act_\ag)_{\ag\in\Ag}$, one for each agent, and such tuples are called \emph{action profiles}.
Given an action profile $\profile{\act}$ and $\coalition\subseteq\Ag$, we let $\act_\coalition$ be the components of agents in  $\coalition$, and $\profile{\act}_{-\coalition}$ is $(\act_\agb)_{\agb\not \in \coalition}$. Similarly, we let $\Ag_{-\coalition}=\Ag\setminus\coalition$.

\halfline
\head{Distributions. } Let $X$ be a finite non-empty set. A \emph{(probability) distribution} over $X$ is a function $\distribution:X \to [0,1]$ such that $\sum_{x \in X} \distribution(x) = 1$. Let $\Dist(X)$ be the set of distributions over $X$. We write $x \in \distribution$ for $\distribution(x) > 0$. 
If $\distribution(x) = 1$ for some element $x \in X$, then $\distribution$ is a \emph{point (a.k.a. Dirac) distribution}. 
If, for $i\in I$, $\distribution_i$ is a distribution over $X_i$, then, writing $X = \prod_{i\in I} X_i$, the \emph{product distribution} of the $\distribution_i$ is the distribution $\distribution:X \to [0,1]$ defined by $\distribution(x) = \prod_{i\in I} \distribution_i(x_i)$.

\halfline
\head{Markov Chains. }
  A \emph{Markov chain} $M$ is a tuple $(\setpos,p)$ where $\setpos$ is a countable non-empty set of states and $p \in \Dist(\setpos \times \setpos)$ is a distribution. For $s,t\in\setpos$, the values $p(s,t)$ are called \emph{transition probabilities} of $M$.
  A \emph{path} is an infinite sequence of states. %

\halfline
\head{Concurrent Game Structures. } 
  A \emph{stochastic concurrent game structure} (or simply \emph{CGS})
  $\System$ is a tuple 
  $ ( \setpos, \legal, 
  \trans, \val)$ where 
  (i) $\setpos$ is a finite non-empty set of \emph{states};
   (ii) $\legal: \setpos \times \Ag \to 2^\Act\setminus\{\emptyset\}$ is a \emph{legality function} defining the available actions for each agent in each state, we write $\profile{\legal(\pos)}$ for the tuple $(\legal(\pos, \ag))_{\ag\in\Ag}$; 
  (iii)   for each state $\pos \in \setpos$ and each  move $\mov \in \profile{\legal(\pos)}$, the \emph{stochastic transition function} $\trans$ gives the (conditional) probability $\trans(\pos, \mov)(s')$ of a transition from state $\pos$ for all $\pos' \in \setpos$ if each player $\ag \in \Ag$ plays the action $\mova$, and remark that $\trans(\pos, \mov)\in\Dist(\setpos)$;
  (iv) $\val:\setpos \to 2^{\APf}$ is a \emph{labelling function}.

For each state $\pos \in \setpos$ and joint action $\mov \in \prod_{\ag \in \Ag} \legal(\pos,\ag)$, we assume 
  that there is a state $\pos'\in\setpos$ such that $\trans(\pos, \mov)(\pos')$ is non-zero, that is, every state has a successive state from a legal move, formally $\mov\in \legal(\pos,\ag)$.

\begin{example}[Secure voting \footnote{Our running example on secure voting is adapted from the case study from  \cite{jamroga2020natural,jamroga2022measure}. }]
\label{ex:voting}

Assume a voting system with two types of agents: voters and  coercers, represented by the disjoint sets $V \subset \Ag $ and $C\subset \Ag$, resp. We consider a finite set of %candidates,
receipts, and signatures. 
The actions of the voters  are  $scanBallot$, $enterVote$, $cnlVote$, $\textit{conf}$, $checkSig_s$, $checkrec_r$, $shred_r$, and $noop$, which  represent that the agent is  scanning the ballot,  entering their vote, canceling it, \textit{conf}irming it, checking its signature $s$,  checking the receipt $r$, shredding the receipt $r$, and doing nothing, resp. On its turn, the coercer can perform the actions $coerce_{v}$, $request_v$, $punish_v$, and $noop$, representing that she is coercing the voter $v$,
requesting $v$ to vote, punishing $v$, and doing nothing, resp.  

The CGS has propositions denoting the state of the voting system. Specifically, they describe whether the voter $v$ was coerced ($coerced_v$), punished ($punished_v$), requested to vote ($requested_v$), has a ballot available ($hasBallot_v$), scanned the ballot ($scanned_v$), entered the vote which has the signature $s$  ($entVote_{v,s}$), and has already voted ($vot_v$). For a signature $s$, the proposition $sigOk_s$ 
denotes whether the signature $s$ was checked and corresponds to the one in the system, while the proposition $sigFail_s$ denotes that it was checked but did not correspond. For a receipt $r$, the propositions $rec_{v,r}$ %,  $receiptFail_r$ 
and $shreded_r$ denotes whether   $r$ associated with voter $v$  
and whether $r$ was destroyed (and it's no longer visible), resp.  

Actions performed by the agents may fail and may not change the state of the system as intended by them. 
For instance,  the coercer may not succeed  (attempting) to coerce a voter with  the action $coerce_{v}$ (and thus, $coerced_v$ may not be true in the next state). Similarly, a voter's request to shred her receipt  may fail, and the information on the receipt be still visible. The probability of an action failing is described by the CGS stochastic transition function.  
\end{example} 

\halfline
\head{Plays. } 
A \emph{play} or path in a CGS $\System$ is an infinite sequence $\iplay=\pos_0 \pos_1 \cdots$ of states
such that there exists a sequence $\mov_0 \mov_1 \cdots$ of joint-actions such that $\mov_i \in \legal(\pos_{i})$ and  $\pos_{i+1} \in \trans(\pos_i,\mov_i)$ (\ie, $\trans(\pos_i,\mov_i)(\pos_{i+1} )>0$) for every $i \geq 0$.
We write $\iplay_i$ for $\pos_i$, 
$\iplay_{\geq i}$ for the suffix of
$\iplay$ starting at position $i$. 
Finite paths are called \emph{histories}, and the set of all histories is denoted $\History$. We write $\last(\history)$ for the last state of a history $\history$ and $len(\history)$ for the size of $\history$.

\section{Behavioral Natural Strategies}

In this section, we define behavioral\footnote{Behavioral strategies define the probability of taking
an \textit{action} in a state. This is different from mixed strategies, which define the probability of taking a  \textit{strategy} in a game. The relation of behavioral and mixed strategies is discussed in \cite{kaneko1995behavior}.} natural strategies over \CGS, based on the definition in \cite{natStrategy}, and use them to provide the semantics of Nat\ATLs. 
Natural strategies are conditional plans, represented through an ordered list of condition-action rules. The intuition is that the first rule whose condition holds in the history of the game is selected, and the corresponding action is executed. 
The conditions are regular expressions over \textit{Boolean formulas} over $\APf$, denoted $Bool(\APf)$ and given by the following BNF grammar: 
\[\phi \coloncolonequals p \mid \phi \lor \phi \mid \neg \phi 
\text{\hspace{2em} where  $p \in \APf$.}\] 
   
Given a state $\pos \in \setpos$ and a formula $\phi \in Bool(\APf)$, we inductively define the satisfaction value of $\phi$ in $\pos$, denoted $\pos \models \phi$, as follows: 
\begingroup  \allowdisplaybreaks
  \begin{align*}
    \pos &\models p &\text{ iff }& p\in\val(\pos) \\
    \pos &\models \phi_1 \lor \phi_2 &\text{ iff }&  \pos \models \phi_1 \text{ or } \pos \models \phi_2\\
    \pos &\models \neg \phi &\text{ iff }& \text{not } \pos \models \phi  
  \end{align*}
  \endgroup

Let $\setregular(Bool(\APf))$ be the set of regular expressions over the conditions $Bool(\APf)$, defined with the constructors $\concat, \ndchoice, \iteration$ representing concatenation, nondeterministic choice, and Kleene iteration, respectively. 
Given a regular expression $\regular$ and the language $\Language(\regular)$ of finite words generated by $\regular$, a history $\history$ is \textit{consistent} with $\regular$ iff there exists a word $b \in \Language(\regular)$ such that $|\history| = |b|$ and $\history[i] \models b[i]$, for all $0 \leq i \leq |\history|$. 
Intuitively, a history $\history$ is consistent with a regular expression $\regular$ if the $i$-th  epistemic condition in $\regular$ holds in the $i$-th state of $\history$ (for any position $i$ in $\history$). 

A \textit{behavioral natural strategy $\strat$ with recall} for an agent $\ag \in \Ag$ is a sequence of pairs $(\regular, \Dist(\Act))$, where $\regular \in \setregular(Bool(\APf))$ is a regular expression representing recall, and $\distribution(\Act)$ is a distribution over the actions with $\distribution(\act)\neq0$ if $\act$ is available for $\ag$ in $\last(\history)$ (i.e., for $\act \in \legal(\ag, \last(\history))$), for all histories $\history$ %$ \in \sethistory$ 
consistent with $\regular$. The last pair in the sequence is required to be  $(\top\iteration, \distribution(\Act))$, with $\distribution(\act) = 1$ for some $\act \in \legal(\pos,\ag)$ and every $\pos \in \setpos$.   
A \textit{behavioral memoryless natural strategy} is a behavioral natural strategy without recall: each condition is a Boolean formula (i.e., all regular expressions have length 1). 
A strategy $\sigma$ is deterministic if for all pairs $(\regular, \distribution)$, we have $|\{\act\in \Act\ |\ \distribution(\act)\neq 0\}| = 1$. For readability of the examples, given a pair $(\regular, \distribution)$, we write  $(\regular, \act)$  if $\distribution(\act)=1$ for some action $\act \in \Act$.

\begin{example}[Secure voting, continued]
Recall the voting system introduced in Example 
\ref{ex:voting}. 
The following is a deterministic memoryless  natural strategy for the voter $v$:

\begin{enumerate}
%\small{
\item  \label{p:v1} $(hasBallot_v \land \neg scanned_v, scanBallot)$
    \item \label{p:v2} $(\neg vot_v\land scanned_v, enterVote)$
    \item \label{p:v3} $(\neg vot_v\land entVote_{v,s} \land \neg (sigOk_s \lor sigFail_s), checkSig_s)$, for each signature $s$
    \item \label{p:v4} $(\neg vot_v\land entVote_{v,s} \land sigFail_s, \allowbreak 
    cnlVote)$, for each $s$
    \item \label{p:v5} $(\neg vot_v\land entVote_{v,s}\land  sigOk_s, \textit{\textit{conf}})$, for each $s$
    \item \label{p:v6}  $(vot_v \land rec_{v,r} \land \neg shreded_r, shred_r)$, for each receipt $r$
    \item \label{p:v7} $(\top, noop)$ 
%}
\end{enumerate}

This strategy specifies that the agent first scans the ballot in case there is one, and it was not scanned (Pair \ref{p:v1}). Otherwise, if the agent has not voted yet and has scanned, she enters her vote (Pair \ref{p:v2}). If the agent did not vote, entered the vote and did not check the signature, she checks it (\ref{p:v3}). 
When the signature is checked, the agent chooses to cancel or \textit{\textit{conf}}irm the vote, depending on whether the verification has failed or succeeded (Pairs \ref{p:v4} and \ref{p:v5}).  If the agent has voted and there is an unshredded visible receipt, the agent requests it to be shredded (Pair \ref{p:v6}). Finally, if none of the previous conditions apply, the agent does not do any action (Pair \ref{p:v7}).

A behavioral natural strategy with recall for a coercer is:

  \begin{enumerate}
      \item \label{p:c1} $  \big(\top\iteration\concat \bigwedge_{v\in V} \neg coerced_v, \distribution_V\big)$, where $\distribution_V$  is a probability distribution over the actions for coercing the voters, with  $\sum_{v \in V}\distribution_V(coerce_v)= 1$
      \item \label{p:c2} $(\top\iteration\concat coerced_v \land \neg requested_v, request_v)$, for $v \in V$      
      \item \label{p:c3} $(\top\iteration \concat \neg requested_v\iteration \concat (requested_v \land \neg vot_v)\iteration \land \neg  punished_v,  punish_v)$, for each $v \in V$  
      \item \label{p:c4} $(\top \iteration \concat \neg coerced_v \land \neg coerced_{v'}, \distribution_{v,v'})$,  where $\distribution_{v,v'}$  is a probability distribution over the actions for coercing the voters $v$ and $v'$ , with  $\distribution_{v,v'}(coerce_v)+\distribution_{v,v'}(coerce_{v'})= 1$, for each pair $(v, v')$ of distinct voters in $V$
      \item \label{p:c5} $(\top\iteration, noop)$
  \end{enumerate}

  This behavioral strategy considers first the situation in which no voter was already coerced, and the agent chooses the action to coerce one of them randomly (Pair \ref{p:c1}). Next condition-action pair in the strategy says that if a voter was coerced, but her vote was not requested, the agent requests her vote (Pair \ref{p:c2}). If the voter was requested  (at least once in the history), but (continually) did not vote and was not punished, the agent  punishes her (Pair \ref{p:c3}). Next pair says that if there are two distinct non-coerced voters, the agent randomly chooses one to coerce (Pair \ref{p:c4}). If none of those conditions apply, no operation is performed
  (Pair \ref{p:c5}).

\end{example} 

 Throughout this paper, let $\setting \in \{r, R\}$ denote whether we consider memoryless or recall strategies respectively. 
Let $\setstrata^{\setting}$ be the set of behavioral  
natural strategies for agent $\ag$ and  $\setstrat^{\setting} = \cup_{\ag \in \Ag} \setstrata^{\setting}$. %

Let $\size(\strat_\ag)$ denote the number of guarded actions in $\strat_\ag$, $\cond_\idx(\strat_\ag)$ be the $\idx$-th guarded condition on $\strat_\ag$, $\cond_\idx(\strat_\ag)[j]$ be the $j$-th Boolean formula of the guarded condition $\strat_\ag$, and $\action_\idx(\strat_\ag)$ be the corresponding probability distribution on actions. Finally, $\match(\history, \strat_\ag)$ is the smallest index $\idx \leq \size(\strat_\ag)$ such that for all $0 \leq j \leq |last(\history)|$, $\history[j] \models {\cond_\idx(\strat_\ag)[j]}$\footnote{Note that, as in \cite{natStrategy}, we consider the case in which the condition has the same length of the history. There is also the case in which the condition is shorter than the history. This is due to the usage of the Kleene iteration operator. In the latter case, we need to check a finite number of times the same Boolean formula in different states of the history.} and $\action_\idx(\strat_\ag) \in \legal(\ag, \lasth(\history))$. In other words, $\match(\history, \strat_\ag)$ matches the state $\lasth(\history)$ with the first condition in $\strat_\ag$ that holds in $\history$, and action available in $\lasth(\history)$. 

Given a natural strategy $\strat_\ag$ for agent $\ag$ and
a history $\history$, we denote by $\strat_\ag(\history)$ the probability distribution of actions assigned by $\strat_\ag$ in the last state of $\history$, i.e., $\strat_\ag(\history) = act_{match(\history, \strat_\ag)(\strat_\ag)}$.

\paragraph{Complexity of Natural Strategies.}
The complexity $c(\strat)$ of strategy $\strat$ is the total size of its representation and is denoted as follows:   $c(\strat) \egdef \sum_{(\regular, \distribution) \in \strat}|\regular|$,  where $|\regular|$ is the number of symbols in $\regular$, except parentheses.   %

\paragraph{Relation to Game Theory.}
From a game-theoretic point of view, natural strategies can be encoded as finite-memory strategies using finite state machines (e.g., finite-state transducers) that only read the propositional variables holding in a state (akin to imperfect information). Natural strategies and finite-state machines are fundamentally different in their encoding, in particular, the finite state machine may be exponential in the size of the natural strategy it is associated with, as proved in Theorem 12 of~\cite{natStrategy}. 

\section{\PATL and \PATLs with Natural Strategies}
Next, we introduce the Probabilistic Alternating-Time Temporal Logics \PATLs and \PATL with Natural Strategies (denoted, Nat\PATLs and Nat\PATL, resp). 

\begin{definition}\label{def:ATLsF-syntax}
The syntax of Nat\PATLs  is defined by the grammar: 
\begin{align*}
	\varphi  ::= p \mid  {\varphi \lor  \varphi} \mid \neg \varphi \mid \X \varphi \mid \varphi \until \varphi \mid \coop{\coalition}^{\bowtie d}_{k} \varphi
\end{align*}
where $p \in \APf$, $k \in \mathbb{N}$, $\coalition \subseteq \Ag$, $d$ is a rational constant in
$[0, 1]$, and $\bowtie \in  
\{\leq, <, >, \geq\}$. 
\end{definition}

The intuitive reading of the operators is as follows: ``next'' $\X$ and ``until'' $\U$ are the standard temporal operators. $\coop{\coalition}^{\bowtie d}_{k}\varphi$ asserts that there exists a strategy with complexity at most $k$ for the coalition $\coalition$ to collaboratively enforce $\varphi$  with a probability in relation $\bowtie$ with constant $d$.
We  make use of the usual syntactic sugar ${\F \varphi \colonequals \top \U \varphi}$ and ${\G \varphi \colonequals \neg \F \neg \varphi}$ for temporal operators.

A Nat\PATLs formula of the form $\coop{\coalition}^{\bowtie d} \varphi$ %
is also called state formula.   
An important syntactic restriction of Nat\PATLs, namely  Nat\PATL, is defined as follows.  

\begin{definition}[Nat\PATL\ syntax] \label{def:ATLF-syntax}
	The syntax of Nat\PATL  is defined by the grammar
	\begin{align*}   	
		\varphi ::= p \mid  \varphi \lor \varphi \mid \neg \varphi \mid \coop{\coalition}^{\bowtie d}_{k} (\X \varphi) \mid \coop{\coalition}^{\bowtie d}_{k}(\varphi \until\varphi)	
  \end{align*}
where $p$, $k$, $\coalition$, $d$, and $\bowtie$ are as above.  
\end{definition}

Before presenting the semantics, we show how to define the probability space on outcomes.

 \paragraph{Probability Space on Outcomes. } An \emph{outcome} of a
 strategy profile $\profile\strat = (\strat_\ag)_{\ag \in \Ag}$ and a state $\pos$ 
 is a play $\iplay$ that starts in state
 $\pos$ and is extended by $\profile\strat$, formally $\iplay_{0} =
 \pos$, and for every $k \geq len(h)$ there exists $\mov_k \in
 (\strat_\ag(\iplay_{\leq k}))_{\ag \in \Ag}$ such that $\iplay_{k+1} \in
 \trans(\iplay_k,\mov_k)$. 
 The set of outcomes of a strategy profile  $\profile\strat$ and state $\history$  is denoted $Out(\profile\strat,\pos)$.   
 A given %
 CGS 
 $\System$, strategy profile $\profile{\strat}$, and state 
 $\pos$ induce an infinite-state Markov chain
 $M_{\profile{\strat},\pos}$ whose states are the histories in
 $Out(\profile{\strat},\pos)$. 
   and whose transition probabilities
are defined as  $p(\history,\history\pos')=\sum_{\mov\in\Act^\Ag}
 \profile{\strat}(\history)(\mov) \times
 \trans(\last(\history),\mov)(\pos')$.  
 The Markov chain
 $M_{\profile{\strat},\pos}$ induces a canonical probability space on
 its set of infinite paths~\citep{kemeny1976stochastic}, which can be identified with the set of plays in  $Out(\profile{\strat},\pos)$ and the corresponding measure is denoted $out(\profile{\strat},\pos)$.
 ~\footnote{This is a classic construction, see for instance
 ~\citep{clarke2018model,berthon2020alternating}.
 }

  Given a coalition strategy $\profile{\strat_\coalition} \in \prod_{\ag \in \coalition} \setstrata^{\setting}$, the set of possible outcomes of $\profile{\strat_\coalition}$ from a state $\pos \in \setpos$ to be the set  $out_\coalition(\profile{\strat_\coalition},\pos) = \{out((\profile{\strat_\coalition},\profile{\strat_{-\coalition}}),\pos) : \profile{\strat_{-\coalition}} \in \prod_{\ag \in \Ag_{-\coalition}} \setstrata^{\setting} \}$ of probability
measures that the players in $\coalition$ enforce when they
follow the strategy $\profile{\strat_\coalition}$, namely, for each $\ag \in \Ag$,
player $\ag$ follows strategy $\strat_\ag$ in $\profile{\strat_\coalition}$. We use $\mu^{\profile{\strat_\coalition}}_\pos$ to range over the measures in $out_\coalition(\profile{\strat_\coalition},\pos)$ as follows:

\begin{definition}[Nat\PATL and Nat\PATLs semantics]
Given a setting $\setting \in \{r,R\}$, Nat\PATL and Nat\PATLs formulas are interpreted in a stochastic CGS $\System$  
  and a path  $\iplay$,
\begingroup
\allowdisplaybreaks
\begin{align*}
 \System,\iplay &\models_\setting p & \text{ iff } & p \in \val(\iplay_0)\\
 \System,\iplay &\models_\setting \neg \varphi & \text{ iff } & \System,\iplay \not \models_\setting \varphi \\
 \System,\iplay &\models_\setting \varphi_1 \lor \varphi_2 & \text{ iff }&  
 \System,\iplay \models_\setting \varphi_1  \text{ or } \System,\iplay \models_\setting \varphi_2
 \\
\System, \iplay&\models_\setting \coop{\coalition}^{\bowtie d}_k \varphi & \text{ iff }  & 
\exists \profile{\strat_{\coalition}} \in \prod_{\ag \in\coalition}  
 \{\alpha \in \setstrata^{\setting}        : c(\alpha) \leq k\} 
\\ \text{ s.t. }  \forall \mu^{\profile{\strat_\coalition}}_{\iplay_0} \in out_\coalition(\profile{\strat_{\coalition}},\iplay_0)  
\text{, } \mu^{\profile{\strat_\coalition}}_{\iplay_0}(\{\iplay' : \System,\iplay' \models_\setting \varphi\}) \bowtie d\span\span\span
\\ 
 \System,\iplay &\models_\setting \X \varphi & \text{ iff } & \System,\iplay_{\geq 1} \models_\setting \varphi \\
\System, \iplay  & \models_\setting \psi_1 \until \psi_2 & \text{ iff } &  \exists k \geq 0 \text{ s.t. } \System,\iplay_{\geq k} \models_\setting \psi_2 \text{ and } 
\\ & & &
\forall j \in [0,k).\, \,  \System,\iplay_{\geq j}\models_\setting \psi_1
\end{align*} 
\endgroup
\end{definition}

\section{Motivating Examples}\label{sec:ex}

In this section, we present problems that motivate reasoning in stochastic MAS, and we illustrate how Nat\PATLs-formulas can be used to express properties on those systems.

Let us start with an example  of door access control with a random robot. This example illustrates a setting in which it suffices to have deterministic strategies in stochastic CGSs.

\begin{example}[Access control]
    We consider the example illustrated in Figure \ref{fig:robot}.      
    We are given a set $\Ag$ of agents, a set of square tiles, where a non-controlled robot moves randomly either one tile right, left, up, or down at every time step.      
    Between every tile, there is either a wall, a door controlled by some agent with actions $open$ and $close$, or nothing. 
    The robot can cross an empty space, cannot cross a wall, and can only cross a door if the agent controlling has taken action $open$. 
    Given a set of targets represented by atomic propositions $T = \{t_i\in \APf,\ i\in \{1,n\}\}$ labelling some tiles, and related coalitions $\{C_i\subseteq \Ag, i\in  \{1,n\}\}$, we use Nat\PATLs  to state that some coalition $C\subseteq\Ag$ has a strategy with memory $k\in \mathbb{N}$ reaching all targets infinitely often with probability $0.7$, formally: 
\begin{equation} 
\label{eq:ac}\coop{\coalition}^{\geq 0.7}_{k} \G\bigwedge_{t_j\in T, t_j\neq t_i}  \F\;  t_j
\end{equation}

    In the example of Figure~\ref{fig:robot}, where $n=2$, the coalition controls two doors adjacent to the initial state. Even though the structure is probabilistic, memoryless strategies are sufficient. Opening the left door gives the robot a chance to move left, which brings it closer to target $t_0$, but in this center-leftmost square, the agents not in the coalition may open the door leading to the bottom-left square, where they can then trap the robot: the robot only has probability $\frac{1}{2}$ to successfully reach $t_0$, and otherwise may be trapped forever. The other option available to the coalition in the initial state is to close the left door, and open all other doors. The robot will take longer, but has probability $1$ to eventually reach target $t_1$. Thus, we can reach $t_1$ with probability $1$, but $t_0$ with only probability $\frac{1}{2}$, and property~\ref{eq:ac} does not hold.

\begin{figure}[hb]
\centering 
\begin{tikzpicture}
\draw[help lines, step=1] (0,0)grid(3,3);
\path (1.5,1.5) node () {{\Huge$ \Box$}};
\path (1.4,1.6) node () {$\circ$};
\path (1.6,1.6) node () {$\circ$};
\path (1.5,1.4) node () {$\textendash$};ù

\path (1,2.5) node () {$\envir$}; \path (2,2.48) node () {$\system$};
\path (0.5,1.98) node () {$\system$}; \path (1.5,2) node () {$\envir$}; \path (2.5,1.98) node () {$\system$}; 
\path (1,1.48) node () {$\system$}; \path (2,1.5) node () {$\envir$};
\path (0.5,1) node () {$\envir$}; \path (1.5,0.98) node () {$\system$}; \path (2.5,0.98) node () {$\system$}; 
\path (1,0.5) node () {$ $}; \path (2,0.48) node () {$\system$};

\path (0.5,2.5) node () {$t_0$};
\path (1.5,2.5) node () {$t_1$};
\end{tikzpicture}
    \caption{A robot in a maze, where $\system$ and  $\envir$ denote a door respectively controlled by the coalition or the agents not in the coalition. Full lines represent walls. $t_0,t_1$ are two targets.}
    \label{fig:robot}
    \end{figure}
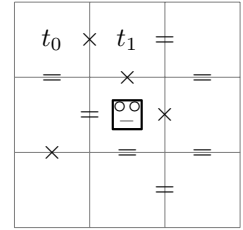
\end{example}

Going back to Example \ref{ex:voting}, we now illustrate how Nat\PATLs and Nat\PATL can be used for the formal security analysis of  voting systems.

\begin{example}[Secure voting, continued]

A requirement usually considered
for electronic voting systems is \textit{voter-verifiability}, which captures the ability of the voter to verify her vote \cite{jamroga2022measure}. In our example, this is represented by the propositions $sigOk_s$ and $sigFail_s$. The Nat\PATL formula
$$\coop{v}_k^{\geq 0.9} \F (sigOk_s \lor sigFail_s)$$ 
says that the voter $v$ has a strategy of size at most $k$
so that, at some point and with probability at least $0.9$, she  obtains either the positive or the negative outcome of verifying the signature $s$.

Another requirement is 
\textit{receipt-freeness}, which expresses that the voters can not gain a receipt to prove that they voted in a certain way. In our example, the propositions $receipt_{v,r}$ and $shreded_r$ represent that a receipt $r$ is associated with the voter $v$ and that the information on it was destroyed. The Nat\PATL formula: 
$$\neg \coop{v}_k^{\geq 0.5} \F \bigvee_{\text{receipt } r}(receipt_{v,r} \land \neg  shreded_r)$$
says that there is no strategy of complexity at most $k$ to ensure with probability at least $0.5$ that, eventually, there will be an unshredded receipt for her.
\end{example}

\section{Model Checking Complexity}\label{sec:mc}

In this section, we look at the complexity of model checking for different versions of Nat\PATL.
\begin{definition}

Given a setting $\setting\in\{r, R\}$, a CGS $\System$, 
state $s \in \setpos$, 
and  formula $\varphi$ in 
Nat\PATL{$\setting$}   (Nat\PATLs{$\setting$}, resp.), 
the model checking problem for
Nat\PATL{$\setting$}
(Nat\PATLs{$\setting$}, resp.) consists in deciding,
whether $\System, s \models_{\setting} \varphi$.
\end{definition}

\begin{definition}[Number of negations]
Given a Nat\PATL[r] (respectively Nat\PATL[R]) formula $\varphi$ without coalition operator, we define subformulas under an even and odd number of negations as follows: $\varphi$ is under an even number of negations, and if $\varphi_1\vee \varphi_2$ or $\X\varphi_1$ or $\varphi_1\until\varphi_2$ is under an even (respectively odd) number of negations, so are $\varphi_1$ and $\varphi_2$. If $\neg\varphi_1$ is under an even (respectively odd) number of negations, then $\varphi_1$ is under an odd (respectively even) number of negations. 
\end{definition}

\begin{theorem}\label{prop:mcheck-det-patl}
Model checking Nat\PATL[r] (respectively Nat\PATL[R]) %
with deterministic natural  strategies for the coalition is in $\Delta_2^P = \Ptime^{\NP}$.
\end{theorem}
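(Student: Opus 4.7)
The plan is to proceed by structural induction on $\varphi$, computing, for every state subformula $\psi$, the set of states where $\psi$ holds, in bottom-up order. Atomic propositions and Boolean combinations are trivially polynomial-time. The only nontrivial case is a coalition subformula $\coop{\coalition}^{\bowtie d}_{k}\theta$ with $\theta\in\{\X\varphi',\varphi_1\until\varphi_2\}$, for which I decide whether $\System,s\models_\setting \coop{\coalition}^{\bowtie d}_{k}\theta$ state by state using one NP-oracle call. Because $\varphi$ has polynomially many subformulas and $\System$ has polynomially many states, the total number of oracle calls is polynomial and the procedure falls in $\Ptime^{\NP}$.

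The NP-oracle queried at state $s$ non-deterministically guesses a deterministic natural strategy $\profile{\strat_\coalition}$ for the coalition with $c(\profile{\strat_\coalition})\le k$, which is a certificate of polynomial size, and then verifies in deterministic polynomial time that the resulting probability of $\theta$ from $s$ satisfies $\bowtie d$ against every opponent natural strategy. The verification reduces to a standard MDP computation: fixing $\profile{\strat_\coalition}$ uniquely determines the coalition's move at each encountered history via the first matching rule, so only the opponent agents retain choices. In the memoryless case Nat\PATL[r] this induces an MDP directly on $\setpos$. In the recall case Nat\PATL[R] I build a polynomial-size product MDP whose states are pairs $(\pos,\kappa)$, where $\kappa$ records, via a Thompson NFA summary for each guarded regex of $\profile{\strat_\coalition}$, enough information to identify the first matching rule at the next step. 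The worst-case probability over opponent strategies of the reachability/next-step event $\theta$ (whose inner state formulas have already been labelled by the inductive hypothesis) is then computed via the standard linear-programming formulation for MDPs in polynomial time, and the result is compared with $d$ according to $\bowtie$.

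The hard part will be keeping the product MDP of polynomial size in the recall setting, and arguing that the MDP-theoretic extremum over opponent actions coincides with the worst case realisable by a natural opponent strategy so that the LP value faithfully certifies the semantic requirement: the former has to avoid the naive $2^k$ subset-determinisation of the guarded regexes and instead rely on an on-the-fly NFA simulation whose summary state is polynomially bounded in $c(\profile{\strat_\coalition})$; the latter requires exhibiting, for any fixed deterministic coalition strategy, natural opponent strategies that realise the MDP-optimal value. A final, orthogonal subtlety, addressed via the even/odd negation definition introduced just before the theorem, is polarity: when $\coop{\coalition}^{\bowtie d}_{k}\theta$ sits under an odd number of negations the oracle query must be replaced by its complement (there is no satisfying coalition strategy), which remains a single NP question and therefore still fits into the $\Ptime^{\NP}$ schedule.
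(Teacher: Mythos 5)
Your overall architecture coincides with the paper's: for each subformula $\coop{\coalition}^{\bowtie d}_{k}\theta$ you guess a deterministic natural strategy of complexity at most $k$ for the coalition, fix it so that $\X$ and $\until$ become reachability/invariance objectives on an induced MDP solvable in polynomial time, and propagate truth values of state subformulas bottom-up. The paper packages this as one globally guessed witness (a strategy per state, per subformula, per agent) plus even/odd-negation bookkeeping, whereas you make one NP-oracle call per state and per coalition subformula and let the deterministic base machine handle negation; that is the cleaner $\Ptime^{\NP}$ organization, and your polarity remark is indeed redundant in this scheme (note only that the ``complemented query'' is a co\NP{} question --- harmless, since the base machine can simply negate the answer of the positive \NP{} query).

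The genuine gap is in the claimed polynomial-time verification inside the oracle for the recall case. To execute a fixed natural strategy with recall one must know, at every history, which guarded regular expression is the \emph{first} to match the whole history, and this requires tracking, for each guard, the set of currently reachable states of its NFA. Your ``on-the-fly NFA simulation whose summary state is polynomially bounded in $c(\strat)$'' bounds only the description length of a single summary (a bit vector over NFA states), not the number of distinct reachable summaries, which can be $2^{\Theta(k)}$; hence the product MDP on pairs $(\pos,\kappa)$ that you hand to the linear program is not shown to be of polynomial size, and the polynomial-time check you rely on is not established. (The paper's proof silently asserts that fixing the strategies ``yields an MDP'' checkable in polynomial time, so you are not worse off, but your proposal explicitly rests on a polynomial product that you have not constructed.) The second obligation you flag --- that the MDP extremum over unrestricted opponents is realised (or approached) by natural opponent strategies --- is also genuinely nontrivial, because natural strategies condition only on propositional labels and cannot distinguish equally-labelled states; neither your sketch nor the paper discharges it, so it must either be proved or isolated as an explicit assumption before the reduction to MDP values is sound and complete.
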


\begin{proof}
    We first define the conjugate $\conjugate$ of comparison operators: $\conjugate{\leq}$ is $<$, $\conjugate{<}$ is $\leq$, $\conjugate{>}$ is $\geq$ and $\conjugate{\geq}$ is $>$.    
    We focus on the case of Nat\PATL[R]. Let $\varphi$ be a Nat\PATL[R] formula and $\System = ( \setpos, \legal, \trans, \val)$ be a CGS. 
    We guess a polynomial witness consisting of one deterministic strategy $\sigma[\pos,\varphi',\ag]$ with some bounded complexity $k$ for every $\pos\in \setpos$,  every subformula $\varphi' = \coop{\coalition}^{\bowtie d}_{k}(\varphi'')$ of $\varphi$, and every $\ag\in\Ag$. We now show how to check that formula $\varphi$ holds given such a witness. 
    
    For every formula $\varphi''$ that does not contain any coalition operator and appearing under an even number of negations, it can be decided in polynomial time if $\pos \models \coop{\Ag\backslash \coalition}^{\conjugate{\bowtie} 1-d}_{k}(\neg \varphi'')$ holds when some strategies $\sigma[\pos,\varphi',\ag]$ are fixed for agents $\ag\in\coalition$. Indeed, after fixing such strategies, we obtain an MDP where formula $\neg \varphi''$ can be translated in polynomial time in a polynomial reachability or invariance objective, and checking whether there is probability $\conjugate{\bowtie} 1-d$ for $\neg\varphi''$ to hold on this MDP can be done in polynomial time~\cite{DBLP:books/daglib/0020348}.

    For  formulas $\varphi''$ appearing under an odd number of negations, after fixing some strategies $\sigma[\pos,\varphi',\ag]$  for agents $\ag\not\in\coalition$ it can be decided in polynomial time if $\pos \models \coop{ \coalition}^{\bowtie 1-d}_{k}( \varphi'')$. Indeed, after fixing such strategies, we obtain an MDP where formula $\varphi''$ can be translated in polynomial time in a polynomial reachability or invariance objective, and checking whether there is probability $1-d$ for $\varphi''$ to hold on this MDP can be done in polynomial time~\cite{DBLP:books/daglib/0020348}.

    Going through $\varphi$ in a bottom-up manner, we iteratively replace every subformula $\varphi' = \coop{\coalition}^{\bowtie d}_{k}(\varphi'')$ by a new formula that is true in states $\pos$ iff $\coop{\Ag\backslash \coalition}^{\conjugate{\bowtie} 1-d}_{k}(\neg \varphi'')$ does not hold assuming agents in $\coalition$ follow strategies $\sigma[\pos,\varphi',\ag]$. This can be done in polynomial time, and returns whether $\varphi$ holds or not. 
\end{proof}

\begin{theorem}\label{prop:mcheck-det-patl-h}
Model checking Nat\PATL[r] (respectively Nat\PATL[R]) with deterministic natural  strategies for the coalition is \NP-hard.
\end{theorem}

\begin{definition}[Positive fragment]
The positive fragment of Nat\PATL[r] (respectively Nat\PATL[R]) is the fragment that may use conjunction but no negation. 
\end{definition}

\begin{theorem}\label{prop:mcheck-det-patl}
Model checking the positive fragment of Nat\PATL[r] (respectively Nat\PATL[R]) %
with deterministic natural  strategies for the coalition is in \NP.
\end{theorem}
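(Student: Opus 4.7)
The plan is to mimic the guess-and-verify structure from the proof of Theorem~\ref{prop:mcheck-det-patl}, but exploit the absence of negations to eliminate the alternation that forced the previous bound up to $\Delta_2^P$. Since every coalition operator in a positive formula occurs under an even number of negations, each such operator is existentially quantified at the top level, and a single round of nondeterministic guessing suffices.

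Concretely, given a positive Nat\PATL formula $\varphi$ and a CGS $\System = (\setpos, \legal, \trans, \val)$, I would first nondeterministically guess, for every subformula $\varphi' = \coop{\coalition}^{\bowtie d}_{k} \psi$ of $\varphi$, every state $\pos \in \setpos$, and every $\ag \in \coalition$, a deterministic natural strategy $\sigma[\pos, \varphi', \ag]$ of complexity at most $k$. The total witness size is polynomial in $|\varphi| \cdot |\setpos| \cdot k$ (treating $k$ as polynomially bounded in the input, as in Theorem~\ref{prop:mcheck-det-patl}).

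The verification pass proceeds bottom-up over the subformulas of $\varphi$, computing, for each state formula $\psi$, the set $S_\psi \subseteq \setpos$ where $\psi$ holds. Atoms, disjunctions and conjunctions are handled directly from the $S_\cdot$ already computed for their subformulas. For a coalition subformula $\varphi' = \coop{\coalition}^{\bowtie d}_k \psi$ and a state $\pos \in \setpos$, I fix the guessed strategies $\sigma[\pos, \varphi', \ag]$ for all $\ag \in \coalition$ inside $\System$, obtaining an MDP in which the only remaining choices belong to $\Ag \setminus \coalition$. Because $\psi$ is either $\X \psi'$ or $\psi_1 \until \psi_2$ and all the state subformulas $\psi', \psi_1, \psi_2$ have already been evaluated into subsets of $\setpos$, the inner objective reduces to a polynomial reachability or invariance objective on this MDP. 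The optimal adversarial probability (minimum for $\bowtie \in \{\geq, >\}$, maximum for $\bowtie \in \{\leq, <\}$) is then computable in polynomial time by standard MDP algorithms~\cite{DBLP:books/daglib/0020348}, and I compare it to $d$. Add $\pos$ to $S_{\varphi'}$ exactly when the threshold is met. Accept iff $s \in S_\varphi$ (extending the obvious way if $\varphi$ is not a coalition formula at its root).

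The main conceptual obstacle, compared with Theorem~\ref{prop:mcheck-det-patl}, is clarifying why no NP oracle calls are needed here. In the general case, a coalition operator under an odd number of negations forces a universal quantification over coalition strategies, handled previously by fixing adversary strategies and leaving an NP-check for the coalition. In the positive fragment this situation does not arise: every occurrence of $\coop{\coalition}^{\bowtie d}_k$ lies under zero negations, so it is genuinely existential, and the guessed strategies suffice to witness satisfaction. The remaining adversarial quantification over all (possibly probabilistic) opponent strategies is absorbed into the MDP computation, which, as in the proof of Theorem~\ref{prop:mcheck-det-patl}, is polynomial because optimal adversarial strategies for reachability and safety objectives can be taken deterministic and memoryless. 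The whole procedure is therefore a nondeterministic polynomial-time algorithm, establishing membership in \NP.
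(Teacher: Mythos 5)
Your proposal is correct and follows essentially the same route as the paper: guess deterministic natural strategies of complexity at most $k$ for each coalition subformula, state, and coalition agent, then verify bottom-up by fixing those strategies, reducing each inner objective to a reachability or invariance property on the resulting MDP, and solving it in polynomial time. The only cosmetic difference is that the paper phrases the adversarial check via the conjugate threshold for the complementary coalition ($\coop{\Ag\backslash\coalition}^{\conjugate{\bowtie}\,1-d}_{k}(\neg\varphi'')$ not holding), whereas you compute the optimal adversarial probability directly and compare it to $d$ — these are equivalent formulations of the same verification step.
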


\begin{proof}
    We first define the conjugate $\conjugate$ of comparison operators: $\conjugate{\leq}$ is $<$, $\conjugate{<}$ is $\leq$, $\conjugate{>}$ is $\geq$ and $\conjugate{\geq}$ is $>$.    
    We focus on the case of Nat\PATL[R]. Let $\varphi$ be a Nat\PATL[R] formula and $\System = ( \setpos, \legal, \trans, \val)$ be a CGS. 
    We guess a polynomial witness consisting of one deterministic strategy $\sigma[\pos,\varphi',\ag]$ with some bounded complexity $k$ for every $\pos\in \setpos$,  every subformula $\varphi' = \coop{\coalition}^{\bowtie d}_{k}(\varphi'')$ of $\varphi$, and every $\ag\in\coalition$. We now show how to check that formula $\varphi$ holds given such a witness. 
    
    For every formula $\varphi''$ that does not contain any coalition operator, it can be decided in polynomial time if $\pos \models \coop{\Ag\backslash \coalition}^{\conjugate{\bowtie} 1-d}_{k}(\neg \varphi'')$ holds when some strategies $\sigma[\pos,\varphi',\ag]$ are fixed for agents $\ag\in\coalition$. Indeed, after fixing such strategies, we obtain an MDP where formula $\neg \varphi''$ can be translated in polynomial time in a polynomial reachability or invariance objective, and checking whether there is probability $\conjugate{\bowtie} 1-d$ for $\neg\varphi''$ to hold on this MDP can be done in polynomial time~\cite{DBLP:books/daglib/0020348}.

    Going through $\varphi$ in a bottom-up manner, we iteratively replace every subformula $\varphi' = \coop{\coalition}^{\bowtie d}_{k}(\varphi'')$ by a new formula that is only true in states $\pos$ where $\coop{\Ag\backslash \coalition}^{\conjugate{\bowtie} 1-d}_{k}(\neg \varphi'')$ does not hold assuming agents in $\coalition$ follow strategies $\sigma[\pos,\varphi',\ag]$. This can be done in polynomial time, and returns whether $\varphi$ holds or not. 
\end{proof}

\begin{theorem}\label{prop:mcheck-det-patl-h}
Model checking Nat\PATL[r] (respectively Nat\PATL[R]) with deterministic natural  strategies for the coalition is \NP-hard.
\end{theorem}

\begin{proof}
    
    We start by showing that Nat\PATL[r] with deterministic natural  strategies for the coalition extends POMDPs with memoryless deterministic strategies and almost-sure reachability objective. Indeed, a POMDP represented as a CGS $\System = ( \setpos, \legal, \trans, \val)$ (two states are indistinguishable if they are labelled by the same propositional variables), a single agent $A$, and a set of target states distinguished by some propositional variable $t$ that holds only in these states, there exists a strategy almost surely reaching $t$ from an initial state $s\in\setpos$ if and only if the Nat\PATL[r]  formula $\coop{A}^{=1}_{|\setpos|}(\F t)$ holds on $\System$. Indeed, memoryless strategies cannot have a complexity higher than $|\setpos|$, the number of states in the MDP, and so available strategies coincide.  
    In Proposition 2 of~\cite{DBLP:conf/aaai/ChatterjeeCD16}, it is shown that finding strategies for POMDPs with memoryless \emph{randomized} strategies and almost-sure reachability objective is \NP-hard. It uses a reduction from Lemma 1 of~\cite{DBLP:conf/hybrid/ChatterjeeKS13}, that only uses deterministic strategies. As such, finding strategies for POMDPs with memoryless \emph{deterministic} strategies and almost-sure reachability objective is \NP-hard, so it is the model checking Nat\PATL[r] with behavioral natural deterministic strategies for the coalition.
\end{proof}

\begin{theorem}\label{prop:mcheck-det-patls}
Model checking Nat\PATLs[r] (respectively Nat\PATLs[R]) with deterministic natural  strategies for the coalition is in \DNEXPTIME.
\end{theorem}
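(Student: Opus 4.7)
The plan is to extend the guess-and-verify pattern of Theorem~\ref{prop:mcheck-det-patl} so that the inner verification copes with full LTL-style path formulas rather than only reachability and invariance. Let $\varphi$ be a Nat\PATLs formula and $\System$ a CGS, and focus on the recall case; the memoryless case is analogous.

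First, I would process $\varphi$ bottom-up through its state subformulas, as in Theorem~\ref{prop:mcheck-det-patl}. At each state subformula of the form $\coop{\coalition}^{\bowtie d}_{k}\psi$ whose path-formula body $\psi$ contains no further coalition operators (the inner ones having been resolved and abstracted as fresh atomic propositions), $\psi$ becomes an LTL formula over these atoms. For each such subformula, each state $\pos \in \setpos$, and each agent $\ag \in \coalition$, the machine nondeterministically guesses a deterministic natural strategy $\sigma[\pos,\psi,\ag]$ with $c(\sigma)\leq k$. As in Theorem~\ref{prop:mcheck-det-patl}, coalition subformulas occurring under an odd number of negations are dualised to $\coop{\Ag\setminus\coalition}^{\conjugate{\bowtie}\,1-d}_{k}(\neg\psi)$ so that the witness always encodes the existential player, with the guess moved to the opponents. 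The total guessed witness has size at most exponential in the input.

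Second, once these strategies are fixed, the CGS collapses into an MDP $\mathcal{M}_{\pos,\psi}$ of size polynomial in $|\System|$, on which the only remaining nondeterminism is that of $\Ag\setminus\coalition$. The property to check becomes: for every adversarial strategy on $\mathcal{M}_{\pos,\psi}$, the LTL formula $\psi$ holds with probability $\bowtie d$, i.e., the worst-case probability of the $\omega$-regular objective $\psi$ satisfies the threshold. This is the classical problem of optimising an LTL objective on an MDP, solvable in time polynomial in $|\mathcal{M}_{\pos,\psi}|$ and doubly exponential in $|\psi|$ by translating $\psi$ to a deterministic Rabin automaton and performing end-component analysis on the product MDP~\cite{DBLP:books/daglib/0020348}. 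Because $|\psi|$ is at most $|\varphi|$, the verification fits within a doubly-exponential deterministic budget.

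Aggregating these ingredients yields a nondeterministic guess of an exponential-size witness followed by a deterministic doubly-exponential verification, placing the problem in \DNEXPTIME. Soundness of keeping a \emph{single} witness strategy per state mirrors Theorem~\ref{prop:mcheck-det-patl}: once the outer coalition has fixed its strategies, pure positional-or-finite-memory optimal strategies exist for the opponents on MDPs with $\omega$-regular objectives, so the existential guess at each state does commute with the adversarial choice. The main obstacle is controlling the interaction between the exponential-size strategy witness and the LTL-to-DRA blowup: one must verify each coalition subformula independently on its own polynomial-size MDP $\mathcal{M}_{\pos,\psi}$, with the doubly-exponential cost concentrated in $|\psi|$ rather than the witness, so that neither the product construction nor the overall bottom-up iteration exceeds the \DNEXPTIME envelope.
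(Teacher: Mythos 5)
Your proposal follows essentially the same route as the paper's proof: nondeterministically guess deterministic natural strategies $\sigma[\pos,\varphi',\ag]$ per state, per coalition subformula and per coalition agent, fix them to obtain an MDP, model-check the remaining LTL body in doubly-exponential time (Courcoubetis--Yannakakis via deterministic Rabin automata), and proceed bottom-up replacing coalition subformulas, dualising thresholds with the conjugate operator. The only cosmetic difference is that you bound the witness by an exponential where the paper notes it is polynomial, which does not affect the \DNEXPTIME{} bound.
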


\begin{proof}

    Let $\varphi$ be a Nat\PATLs[R] formula and $\System = ( \setpos, \legal, \trans, \val)$ be a CGS. 
    We guess a polynomial witness consisting of one deterministic strategy $\sigma[\pos,\varphi',\ag]$ with complexity $k$ for every $\pos\in \setpos$,  every subformula $\varphi' = \coop{\coalition}^{\bowtie d}_{k}(\varphi'')$ of $\varphi$, and every $\ag\in\coalition$. We now show how to check in \DEXPTIME that formula $\varphi$ holds given such a witness. 
    
    For every formula $\varphi''$ that does not contain any coalition operator, it can be decided with in \DEXPTIME time if $\pos \models \coop{\Ag\backslash \coalition}^{\conjugate{\bowtie} 1-d}_{k}(\neg \varphi'')$ holds when some strategies $\sigma[\pos,\varphi',\ag]$ are fixed for agents $\ag\in\coalition$. Indeed, after fixing such strategies, we obtain an MDP and an \LTL formula $\neg \varphi''$ that can be model checked in \DEXPTIME~\cite{DBLP:journals/jacm/CourcoubetisY95}.  

    Going through $\varphi$ in a bottom-up manner, we iteratively replace every subformula $\varphi' = \coop{\coalition}^{\bowtie d}_{k}(\varphi'')$ by a new formula that is only true in states $\pos$ where $\coop{\Ag\backslash \coalition}^{\conjugate{\bowtie} 1-d}_{k}(\neg \varphi'')$ does not hold assuming agents in $\coalition$ follow strategies $\sigma[\pos,\varphi',\ag]$. We only need to check a polynomial number of such formulas, and this returns whether $\varphi$ holds or not. 
\end{proof}

\begin{theorem}\label{prop:mcheck-det-patls-h}
Model checking Nat\PATLs[r] (respectively Nat\PATLs[R]) with deterministic natural  strategies for the coalition is \DEXPTIME-hard.
\end{theorem}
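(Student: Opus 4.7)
The plan is to reduce in polynomial time from the LTL model-checking problem on MDPs --- which is 2EXPTIME-complete (upper bound by Courcoubetis and Yannakakis; lower bound via a reduction from LTL realizability/games) and hence in particular DEXPTIME-hard --- to Nat\PATLs model-checking with deterministic natural strategies. Given an MDP $M$, an LTL formula $\varphi$, and a rational threshold $d$, we regard $M$ as a single-agent stochastic CGS $\System$ and enrich its labelling with one fresh atomic proposition per state, so that each state is uniquely identified by its propositional signature (this is harmless for $\varphi$, which only mentions the original propositions). We then ask whether $\System, s_0 \models \coop{\{1\}}_k^{\geq d}\varphi$ for an appropriate choice of $k$.

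For the Nat\PATLs[R] case, the backward direction of correctness is immediate: every deterministic natural strategy for the lone agent is some finite-memory scheduler of $M$. For the forward direction, optimal deterministic schedulers for LTL objectives on MDPs can be taken to be finite-memory, with at most $m = 2^{O(|\varphi|)}$ memory states. Any such Mealy machine $(Q, \delta, \lambda, q_0)$ is encoded as a deterministic natural strategy with recall by creating, for each pair (memory state $q\in Q$, CGS state $s$), a guarded pair whose regular expression matches exactly those histories that end in $s$ and drive the Mealy machine to $q$, and whose action is the Dirac mass on $\lambda(q, L(s))$. The unique state propositions ensure these regexes can be written and that the list of guarded pairs discriminates consistently. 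Since DFA-to-regex conversion may induce an exponential blow-up, the resulting natural-strategy complexity is at most $2^{\mathrm{poly}(|\varphi|,|M|)}$; we take $k$ to be this bound, encoded in binary within polynomial space, keeping the reduction polynomial-time.

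For the Nat\PATLs[r] case the direct embedding fails, since memoryless natural strategies cannot emulate the Mealy machine's transitions. The plan is to fold the needed memory into the state space by taking the product CGS $\System_{\times} = \System \times [m]$, where the agent now chooses at each step both an MDP action and a next memory index, and transitions inherit the MDP's stochasticity on the $\System$-component while updating the memory deterministically. Memoryless deterministic natural strategies on $\System_{\times}$ are then in bijection with deterministic finite-memory schedulers of $M$ with memory $m$, which suffice to attain the LTL optimum. This reduction runs in exponential time and produces an instance of size $N = \mathrm{poly}(|M|) \cdot 2^{O(|\varphi|)}$; since the source requires time $\geq 2^{2^{\Omega(n)}}$ in its input size $n$, and $n = \Theta(\log N)$, the target requires time $\geq 2^{\Omega(N)}$, witnessing DEXPTIME-hardness of Nat\PATLs[r].

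The main technical obstacle is the forward direction in the Nat\PATLs[R] reduction: producing, for each Mealy memory state, a regular expression over propositions that characterises the corresponding set of histories, and arranging the guarded pairs in an order so that the first matching pair always outputs the intended action. The unique state propositions reduce this to a standard DFA-to-regex conversion, but its potentially exponential cost is what forces the bound $k = 2^{\mathrm{poly}(|\varphi|,|M|)}$ (still polynomial in binary). Once this is done, together with the product construction for the [r] case, the hardness transfers from LTL on MDPs to both Nat\PATLs[r] and Nat\PATLs[R].
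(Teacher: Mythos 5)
There is a genuine gap in your reduction. You place the lone agent \emph{inside} the coalition, so the target formula $\coop{\{1\}}^{\geq d}_{k}\varphi$ existentially quantifies over deterministic natural strategies of complexity at most $k$, and the forward direction forces you to realize an optimal scheduler for the \LTL objective as such a strategy. This is exactly where the argument fails. First, optimal deterministic schedulers for \LTL objectives on MDPs need, in general, \emph{doubly} exponential memory (the size of a deterministic Rabin automaton for $\varphi$), not $2^{O(|\varphi|)}$ as you assert without justification. Second, even granting your memory bound, the DFA-to-regex conversion is exponential in the number of states of the \emph{product} machine, which is already exponential, so the guards have doubly exponential total size; your claimed bound $k = 2^{\mathrm{poly}(|\varphi|,|M|)}$ does not follow. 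Third, you cannot rescue polynomiality by ``encoding $k$ in binary'': in this paper the complexity bound is treated so that a strategy of complexity at most $k$ is a polynomial-size witness (this is what the $\Delta_2^P$/\NP\ and \DNEXPTIME upper-bound proofs guess), so a formula carrying an exponential or doubly exponential $k$ is not the output of a polynomial-time reduction in the intended sense. Finally, for Nat\PATLs[r] you concede the reduction is exponential-time and appeal to a padding argument, but $n=\Theta(\log N)$ is false (the MDP $M$ appears uncompressed in $N=\mathrm{poly}(|M|)\cdot 2^{O(|\varphi|)}$), and an exponential-time reduction from a \DEXPTIME-hard (doubly-exponential-time) problem does not establish \DEXPTIME-hardness of the target; at best it yields a much weaker statement.

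The paper avoids all of this with one idea you are missing: use the \emph{empty} coalition. It maps the instance $(M,\varphi,d)$ of \LTL model checking on MDPs to the formula $\coop{\varnothing}^{\geq 1-d}_{k}(\neg\varphi)$ on the same model viewed as a one-agent CGS. Because the coalition is empty, the bound $k$ and the whole issue of expressing finite-memory schedulers as bounded natural strategies become irrelevant: the formula just says that \emph{every} strategy of the remaining agent gives $\neg\varphi$ probability at least $1-d$, i.e.\ no scheduler attains $\varphi$ with probability above the threshold; since \DEXPTIME is deterministic and closed under complement, hardness transfers, and the reduction is trivially polynomial and works verbatim for both the memoryless and the recall semantics. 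If you insist on your singleton-coalition route, you would first have to prove that natural strategies whose complexity bound can be written within a polynomial-size formula suffice to attain the optimal \LTL satisfaction probability in MDPs---which is precisely the step your proposal leaves unsupported.
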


\begin{proof}
    We use a reduction from \LTL model checking on MDPs, which is \DEXPTIME-complete.
    Given an \LTL formula $\varphi$, a threshold $d\in [0,1]$ and a CGS $\System$ with only one agent $Ag$, we say $\phi$  holds with at least probability $d$ on $\System$ if and only if the Nat\PATLs[r] formula $\coop{\varnothing}^{\geq 1-d}_{k}(\neg \varphi)$ holds on MDP $\System$: this formula states that for any strategy of the agent (without any complexity bound, since the coalition is empty), formula $\neg\varphi$ holds with probability at least $1-d$: this only happens if there is no strategy ensuring $\varphi$ with at least probability $d$.  
\end{proof}

When considering probabilistic strategies, we follow the same technique as~\cite{aminof2019probabilistic} to  reduce the problem to model checking real arithmetic. Since \LTL is subsumed by Nat\PATLs[R], we also have a doubly-exponential blowup. On the other hand, with Nat\PATL[R], we roughly follow an idea introduced for stochastic game logic~\cite{DBLP:journals/acta/BaierBGK12} to avoid this blowup. As in the proof of Theorem~\ref{prop:mcheck-det-patl}, it is sufficient to consider reachability and invariance problems, both of which are polynomial.  The same holds for Nat\PATLs[r]. 
Next, we consider the model checking of behavioral strategies. 
We remark that the \DEXPTIME-hardness from Theorem~\ref{prop:mcheck-det-patls-h} also applies to Nat\PATL[r] and Nat\PATLs[r] with behavioral natural strategies. We now give model checking algorithms and their complexity. 

\begin{theorem}
\label{thm:mc-sr}
Model checking Nat\PATLs[r] (respectively Nat\PATLs[R]) with behavioral natural strategies for the coalition is in \TEXPSPACE.
\end{theorem}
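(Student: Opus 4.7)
The plan is to adapt the real-arithmetic reduction used by Aminof et al.~\cite{aminof2019probabilistic} for \PSL: process the formula bottom-up and, for each maximal state subformula $\coop{\coalition}^{\bowtie d}_{k}\psi$ at each state $\pos$, reduce the truth check to the decidability of a sentence $\Phi_{\pos}$ in the first-order theory of real closed fields (TORCF). Because Renegar's decision procedure for TORCF runs in space polynomial in the sentence size (with a constant number of quantifier alternations), it will suffice to bound $|\Phi_{\pos}|$ doubly-exponentially in $|\varphi|+|\System|$ to obtain \TEXPSPACE.

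First I would parameterise a behavioral natural strategy of complexity at most $k$ as a pair (skeleton, distributions): the skeleton is the ordered list of at most $k$ regular-expression guards over $Bool(\APf)$ (in the [r] case each guard is a single Boolean formula), and the distributions assign one real variable per pair (guard, $a\in\Act$) subject to non-negativity, summation-to-one, and the legality constraint that actions unavailable at some state matched by the guard receive zero mass. There are at most singly-exponentially many skeletons of complexity $\leq k$, so in $\Phi_{\pos}$ I would enumerate the coalition's skeleton by an existentially quantified disjunction over an exponential index set with its distribution variables existentially quantified, and dually enumerate the opponents' skeletons by a universal conjunction with their distribution variables universally quantified.

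Second, for any fixed pair of skeletons I would compile each regular-expression guard into a deterministic finite automaton and take the product with the CGS to obtain a Markov chain whose transition probabilities are polynomial expressions in the distribution variables. The path formula $\psi$, whose maximal state subformulas have already been replaced by fresh Booleans marking the states at which they hold (known from earlier stages of the bottom-up pass), is then a pure \LTL formula: we translate it into a deterministic Rabin automaton of doubly-exponential size, take the product with the MC, and by the classical Courcoubetis--Yannakakis characterisation express the satisfaction probability as a rational function of the distribution variables via the linear system for reachability of accepting BSCCs. The atomic constraint $\text{Prob}\bowtie d$ is then appended to $\Phi_{\pos}$.

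The resulting sentence $\Phi_{\pos}$ thus has the shape ``exists coalition skeleton and distributions, for all opponent skeletons and distributions, $\mathrm{Prob}(\psi)\bowtie d$'', whose inner probability term is a rational function of doubly-exponential size; Renegar's algorithm then decides $\Phi_{\pos}$ in polynomial space in its size, i.e.\ doubly-exponential space overall. The Nat\PATLs[r] case is identical except that the guard automata are trivial, which does not alter the asymptotic analysis. The main obstacle I foresee is the faithful encoding of the first-match semantics of natural strategies into the product MC: the state must simultaneously track, for every agent's skeleton and every guard in that skeleton, whether that guard is still the earliest possible match of the current history, so that the distribution used at each step is exactly the one dictated by the natural strategy. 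Once this bookkeeping is in place, the rest of the argument reduces to standard probabilistic \LTL model checking composed with TORCF decision.
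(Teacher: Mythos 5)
Your overall route is essentially the paper's: adapt the real-arithmetic encoding of \cite{aminof2019probabilistic}, represent a bounded-complexity natural strategy by an exponential enumeration of its guard structure (the paper does this by indexing the probability variables $r_{x,s,a}$ with the state $q$ of the automata for the regular-expression guards, plus acceptance constraints enforcing the first-match semantics), pay the doubly-exponential \LTL-to-deterministic-Rabin blowup, and finish with a decision procedure for the first-order theory of the reals. Your per-state, per-subformula bottom-up decomposition into constant-alternation sentences is a legitimate reorganization of that argument rather than a different method.

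There is, however, one genuine gap: the treatment of the opponents. In the semantics, the bound $k$ in $\coop{\coalition}^{\bowtie d}_{k}\varphi$ constrains only the coalition; the measures in $out_\coalition(\profile{\strat_\coalition},\pos)$ are generated by opponent profiles ranging over \emph{all} behavioral natural strategies, with no bound on their complexity. Your sentence $\Phi_{\pos}$ ``dually enumerates the opponents' skeletons by a universal conjunction'', which presupposes a finite (singly-exponential) family of opponent skeletons; no such finite family exists, and truncating the opponents at any fixed complexity weakens the universal quantification, so the check becomes unsound (it may accept $\coop{\coalition}^{\bowtie d}_{k}\varphi$ although some higher-complexity natural counter-strategy violates the threshold). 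The paper avoids enumerating opponents altogether: following \PSL, the quantifier $\forall\mu$ is handled by universally quantified real variables describing the opponents' action distributions directly, so no complexity bound on the opponents is required. To repair your version you would either adopt that device or prove separately that opponents of some computable bounded complexity suffice --- a nontrivial claim, since natural strategies observe only propositional labels (imperfect information), where bounded-memory optima need not exist. A smaller point: your claim that constant-alternation sentences over the reals are decidable in space polynomial in the sentence size is stronger than needed; the bound the paper relies on (exponential in the number of quantifiers \cite{DBLP:journals/jcss/Ben-OrKR86,fitchas1987algorithmes}) already yields \TEXPSPACE from your doubly-exponential sentence, so it is safer to invoke that.
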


\begin{proof}[Sketch of proof]
    Probabilistic Strategy Logic (\PSL) with an additional behavioral natural strategies operator $\exists^{nat}_{k}$ captures Nat\PATLs[R], and we show its model checking is in \TEXPSPACE. We give some additional details on \PSL in Definition~\ref{def:PSL-syntax} and~\ref{def:PSL-semantics} in appendix. 
    When translating a Nat\PATLs[R] formula into a \PSL formula in a bottom-up manner, assuming formula $\varphi$ can already be translated into some $PSL(\varphi)$ without any complexity blowup, the $\coop{\coalition}^{\bowtie d}_{k} \varphi$ subformulas, can be translated as $\exists^{nat}_{k}\sigma\forall\mu \mathbb{P}_{\coalition\rightarrow \sigma,\ \Ag\backslash\coalition\rightarrow \mu} (PSL(\varphi))\bowtie d$: a coalition satisfies $\varphi$ iff there exists a natural strategy for the coalition such that for all strategies of the other agents, the \PSL translation of $\varphi$ holds. To model check the operator $\exists^{nat}_{k}$, we modify the proof of Theorem 1 of~\cite{aminof2019probabilistic} showing that model checking \PSL with memoryless strategies is in \TEXPSPACE. This proof translates \PSL into real arithmetic, and a variable $r_{x,s,a}$ represents the probability for strategy $x$ to take action $a$ in state $s$. We can extend this notation to behavioral natural strategies: for a strategy with complexity $k$, we replace variables $r_{x,s,a}$ by 
    $r_{x,s,a,q}$ where $q$ is the current state of the automata representing the regular expressions of a behavioral natural strategy: 
    $$\bigvee_{\text{strategies }\strat,\ compl(\strat)\leq k}\bigwedge_{(\regular, a)\in\strat} \mathcal{A}_{(\regular, a)}$$
    is an automaton with current state $q[\regular]$. 
    We state that two probabilities are equal if they are accepted by the same regular expressions: \begin{multline*}\bigwedge_{(\regular, a)\in\strat}acc(q[\regular],\mathcal{A}_{(\regular, a)}) \wedge acc(q'[\regular],\mathcal{A}_{(\regular, a)}) \\\Rightarrow r_{x,s,a,q} =r_{x,s,a,q'}\end{multline*}
    Both are exponential in the largest $k$ in the formula, since there are exponentially many possible automata of size less or equal to $k$, and we need to describe at most $k$ of them in every conjunction. Nothing else is changed in the proof of~\cite{aminof2019probabilistic}, and thus we have a \TEXPSPACE complexity in the size of the Nat\PATLs[R] formula, exponential in the size of the system and \DEXPSPACE in the largest complexity $k$ used in the formula. 
\end{proof}

\begin{theorem}
Model checking Nat\PATL[r] (respectively Nat\PATL[R]) with behavioral natural strategies for the coalition is in \EXPSPACE.
\end{theorem}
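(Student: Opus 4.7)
The plan is to adapt the real-arithmetic encoding used in the proof of Theorem~\ref{thm:mc-sr} and to exploit, as already observed in Theorem~\ref{prop:mcheck-det-patl} for the deterministic case, that every coalition operator in Nat\PATL directly governs either a $\X$ or a $\U$ formula. After a standard bottom-up labelling pass, each such inner formula reduces to a one-step reachability or invariance question on the induced MDP, whose Bellman (in)equalities have polynomial size in that MDP. I therefore never need to invoke the Courcoubetis--Yannakakis LTL-to-automaton construction, which is precisely the source of the extra exponential blowup in Theorem~\ref{thm:mc-sr} and the reason that result only reaches \TEXPSPACE.

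Concretely, I process $\varphi$ bottom up. For an innermost coalition subformula $\psi = \coop{\coalition}^{\bowtie d}_k \theta$ with $\theta \in \{\X\varphi_1,\ \varphi_1 \U \varphi_2\}$ and the $\varphi_i$ already replaced by fresh propositions, I build a closed sentence $\Phi_\psi$ in the first-order theory of the reals stating that there exists a behavioral natural strategy $\strat_\coalition$ of complexity at most $k$ for the coalition such that for every behavioral natural strategy $\strat_{-\coalition}$ of complexity at most $k$ for the opponents, the resulting probability of $\theta$ is in relation $\bowtie$ with $d$. As in the sketch of Theorem~\ref{thm:mc-sr}, each strategy is represented by probability variables $r_{x,s,a,q}$, where $q$ ranges over the states of the product automaton of its regular-expression guards, together with a disjunction over possible tuples of guards of total length at most $k$ and equality constraints $r_{x,s,a,q}=r_{x,s,a,q'}$ whenever $q$ and $q'$ accept the same histories. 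The guards' product automaton has singly exponential size, the product MDP with $\System$ inherits this size, and the Bellman system for reachability or invariance on it has polynomial size in that product. Hence $\Phi_\psi$ is of singly exponential size in the input.

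Deciding the first-order theory of the reals is in \PSPACE in the formula length, so $\Phi_\psi$ is decided in space exponential in the input. I then label exactly the states that satisfy $\psi$ with a fresh proposition and recurse on the next-innermost coalition subformula; labels from lower levels are reused without any further blowup, so the full procedure runs in \EXPSPACE. The argument applies verbatim to Nat\PATL[r], where each automaton state $q$ collapses to a single Boolean-formula check on the current state and the product MDP does not blow up in $k$ at all. The main obstacle, as already in Theorem~\ref{thm:mc-sr}, is ensuring that the variables $r_{x,s,a,q}$ genuinely come from a natural strategy of complexity at most $k$, that is, encoding in real arithmetic the disjunction over possible guard tuples and the corresponding equality classes of automaton states while keeping $\Phi_\psi$ singly exponential rather than doubly exponential. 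What saves one level compared to Theorem~\ref{thm:mc-sr} is that the singly-exponential product MDP here replaces the much larger product of $\System$ with the Büchi automaton of an LTL objective that was unavoidable there.
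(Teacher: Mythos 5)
You follow the paper's route: adapt the real-arithmetic encoding of Theorem~\ref{thm:mc-sr} and exploit that in Nat\PATL every coalition operator governs only $\X\varphi$ or $\varphi\until\varphi'$, so the LTL-to-deterministic-automaton step (the source of the extra exponentials in the \TEXPSPACE bound) is replaced by polynomial reachability/invariance constraints; this is exactly the paper's key point, which it obtains via Proposition 5.1 and Theorem 5.2 of \cite{AlurHK02}. Two steps, however, do not hold as written. First, you quantify the opponents over natural strategies \emph{of complexity at most $k$}. The semantics places the complexity bound only on the coalition: $out_\coalition(\profile{\strat_\coalition},\pos)$ ranges over all natural strategies of the agents outside $\coalition$, with no bound. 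Your sentence $\Phi_\psi$ therefore checks a weaker condition and can wrongly accept: a coalition strategy may meet the threshold against every opponent profile of complexity at most $k$ and still fail against some opponent natural strategy of larger complexity (already in the memoryless case an optimal proposition-based response may need more than $k$ condition-action pairs). The paper's translation into \PSL uses an unrestricted $\forall\mu$ over the opponents' strategies precisely to avoid importing the bound $k$ into the universal quantifier.

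Second, your space bound rests on the claim that the first-order theory of the reals is decidable in \PSPACE in the formula length. That is only known for the purely existential (or purely universal) fragment, whereas $\Phi_\psi$ has an $\exists\forall$ prefix of real variables; for such sentences the available bound — the one the paper invokes via \cite{DBLP:journals/jcss/Ben-OrKR86,fitchas1987algorithmes} — is space exponential in the number of quantified variables (and polynomial in the length). Under that bound your construction does not yield \EXPSPACE: you index the variables $r_{x,s,a,q}$ by the states $q$ of a product automaton of singly exponential size, so $\Phi_\psi$ has exponentially many quantified real variables as well as exponential length, which gives only a doubly exponential space bound. The paper instead keeps the number of quantified variables polynomial in the CGS and the formula (the reachability/invariance translation uses a polynomial automaton and polynomially many variables) and lets only the formula length be exponential, which the quantifier-counting bound tolerates. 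To repair your argument you should drop the bound $k$ on the opponents' quantifier, replace the \PSPACE claim by the exponential-in-the-number-of-quantifiers bound, and organize the strategy encoding so that the number of real variables stays polynomial (e.g., indexed by a position in the guarded list and a state of the small guard automata, with the exponential disjunction over guard shapes reusing the same variables), rather than by states of the exponential product automaton.
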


\begin{proof}[Sketch of proof]
    The proof is slightly more complicated than the previous one. We again adapt the proof of Theorem 1 of~\cite{aminof2019probabilistic}. In the proof of Theorem~\ref{thm:mc-sr}, we translate our fragment of \PSL with  natural strategies to real arithmetic. The only exponential blowup comes when translating the coalition operator into $\exists^{nat}_{k}\sigma\forall\mu \mathbb{P}_{\coalition\rightarrow \sigma,\ \Ag\backslash\coalition\rightarrow \mu} (PSL(\varphi))\bowtie d$ where $\varphi$ is assumed to be an \LTL formula whose atoms are either propositional variables, or variables representing other formulas starting with $\mathbb{P}$.   This translation constructs a deterministic Rabin automaton whose size is exponential in the size of the CGS, double exponential in the size of $\psi$, and uses a number of quantifiers double exponential in the size of $\psi$. Since we consider Nat\PATL[R], this LTL formula may only be either $\X \varphi$ or $\varphi \until\varphi'$, where $\varphi$ and $\varphi'$ have been inductively represented as Boolean formulas. Proposition 5.1 and Theorem 5.2 of~\cite{AlurHK02} show that such formulas can be polynomially translated to either reachability or invariance games, which can be done using an automaton and a number of variables both polynomial in the size of the CGS and $\psi$. Since model checking real arithmetic is exponential in the number of quantifiers of the formula~\cite{DBLP:journals/jcss/Ben-OrKR86,fitchas1987algorithmes}, we obtain that model checking Nat\PATL[R] with behavioral natural strategies for the coalition is in \EXPSPACE.
\end{proof}
 
\newcommand{\denotation}[2][]{{\llbracket #2\rrbracket}^{#1}}
\newcommand{\set}[1]{\{#1\}}
\newcommand{\lexpr}{\preceq_e}
\newcommand{\ldist}{\preceq_d}
\newcommand{\Lsys}{\mathcal{L}}
\newcommand{\onlabel}[1]{\colorbox{white}{{{#1}}}}
\newcommand{\Enatstr}[2]{\exists s_{#1}^{\leq #2}}
\newcommand{\Anatstr}[2]{\forall s_{#1}^{\leq #2}}
\newcommand{\Estr}[1]{\exists s_{#1}}
\newcommand{\Astr}[1]{\forall s_{#1}}

\section{Expressivity}\label{sec:expressivity}
We now compare the expressive power of Nat\PATLs\  to that of \PATLs. %
We first recall the notions of distinguishing and expressive powers.

\begin{definition}[Distinguishing power and expressive power \cite{wang2009expressive}]\label{def:dist}
Consider two logical systems $\Lsys_1$ and $\Lsys_2$, with their semantics (denoted $\models_{\Lsys_1}$ and $\models_{\Lsys_2}$, resp.) defined over the same class of models $\mathcal{M}$. 
We say that $\Lsys_1$ is \emph{at least as distinguishing} as $\Lsys_2$ (written: $\Lsys_2\ldist \Lsys_1$) iff for every pair of models $M,M'\in\mathcal{M}$ , 
if there exists a formula $\phi_2\in{L}_2$ such that $M \models_{\Lsys_2}{\phi_2} $ and $M' \not \models_{\Lsys_2}{\phi_2}$, then there is also $\phi_1\in{L}_1$ with $M$ $\models_{\Lsys_1}{\phi_1} $ and $ M' \not \models_{\Lsys_1}{\phi_1}$.

Moreover, $\Lsys_1$ is \emph{at least as expressive} as $\Lsys_2$ (written: $\Lsys_2\lexpr \Lsys_1$) iff for every $\phi_2\in{L}_2$ there exists $\phi_1\in{L}_1$ such that, for every  $M\in\mathcal{M}$, we have $M\models_{\Lsys_2} {\phi_2} $ iff $M\models_{\Lsys_1} {\phi_1}$.
\end{definition}

Nat\PATLs  and \PATLs are based on different notions of strategic ability. 
As for the deterministic setting with \ATL, each behavioral natural strategy can be translated to a behavioral combinatorial one (i.e., mappings from sequences of states to actions), but not vice versa. Consequently, \PATLs can express that a given coalition has a combinatorial strategy to achieve their goal, which is not expressible in Nat\PATLs. On the other hand, Nat\ATLs allows expressing that a winning natural strategy with bounded complexity does not exist, which cannot be captured in \PATLs. 
Now we show that Nat\PATLs allows expressing properties %
that cannot be captured in \PATLs, and vice versa.

\begin{theorem}\label{prop:distpower}
For both memoryless and recall semantics:
\begin{itemize}
    \item \textnormal{Nat}\PATL (resp. \textnormal{Nat}\PATLs) and $\PATL$ (resp, \PATLs) have incomparable \emph{distinguishing} power over  \CGS. %
    \item \textnormal{Nat}\PATL (resp. \textnormal{Nat}\PATLs) and $\PATL$ (resp, \PATLs) have incomparable \emph{expressive} power over \CGS. %
\end{itemize}

\end{theorem}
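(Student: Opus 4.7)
The plan is to reduce the expressive-power claim to the distinguishing-power claim by the standard implication, immediate from Definition~\ref{def:dist}, that $\lexpr$ implies $\ldist$: a formula that separates two models witnesses both kinds of gap at once, so its absence on one side witnesses both incomparabilities. It therefore suffices to exhibit, for each of the four comparisons (Nat\PATL\ vs.\ \PATL\ and Nat\PATLs\ vs.\ \PATLs, each under the memoryless and under the recall semantics), two pairs of CGSs: one pair separated by a natural-strategy formula but by no formula of the classical logic, and one pair separated by a classical formula but by no natural-strategy formula.

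For the direction ``Nat\PATL\ separates where \PATL\ cannot'', I would exploit the complexity bound $k$, which has no counterpart in the classical coalition operator. The recipe is to take $M_1,M_2$ that are probabilistically bisimilar --- so every \PATL\ formula evaluates identically on them --- but whose labellings differ in a bisimulation-preserving way that shifts the minimum complexity of a natural strategy achieving a simple reachability goal $\F p$. Concretely, I would arrange that $M_1$ admits a natural strategy of complexity $k_0$ for $\F p$ (for instance a single pair $(\top\iteration, a)$ when the goal is trivially reachable by a fixed action), while every natural strategy for $\F p$ in $M_2$ requires at least one extra guard, pushing its complexity above $k_0$. The Nat\PATL\ formula $\coop{\ag}_{k_0}^{\geq 1}\F p$ then separates $M_1$ from $M_2$ while every \PATL\ formula evaluates identically on them.

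For the reverse direction I would use the fact, recalled in the introduction, that \PATL\ model checking genuinely requires infinite-memory combinatorial strategies on some CGSs. I would take $M_1$ on which a \PATL\ formula $\coop{\coalition}^{\geq d}\varphi$ holds only via such an infinite-memory combinatorial strategy, and $M_2$ obtained from $M_1$ by a perturbation that destroys the infinite-memory witness yet remains invisible to every finite-complexity natural strategy. Since a natural strategy is by construction a finite list of regular-expression-guarded action distributions, its behavior is captured by a finite-state machine: engineering the perturbation so that it can only be exploited by strategies whose action depends on the label history in a non-regular way forces every Nat\PATL\ formula to evaluate identically on $M_1$ and $M_2$ for every value of $k$, while $\coop{\coalition}^{\geq d}\varphi$ separates them in \PATL.

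The hardest step is exactly this last one: agreement of Nat\PATL\ must be established \emph{uniformly} in the complexity bound $k$, rather than for a single value. I plan to handle it by a simulation argument showing that for any natural strategy of any complexity $k$, the induced Markov chains on $M_1$ and $M_2$ produce the same measure on every Nat\PATL-expressible event, with the simulation relying crucially on the finite-state (hence regular) nature of natural strategies. The Nat\PATLs-vs-\PATLs\ comparison inherits the same two witnesses, since the separating formulas already lie in the Nat\PATL\ / \PATL\ fragment; and the memoryless-recall distinction is absorbed by keeping the witness formulas shallow, so that their natural-strategy witnesses use only regular expressions of length one.
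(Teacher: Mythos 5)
Your overall skeleton coincides with the paper's: show failure of $\ldist$ in both directions by exhibiting witness pairs of models, then transfer to expressive power via the contrapositive of ``$\lexpr$ implies $\ldist$''. That reduction is exactly the paper's concluding step and is correct.

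The gap is in the first half, which is where the actual content of the theorem lies: you never produce the two witness pairs, and the equivalences they must satisfy are asserted rather than established. For the pair that Nat\PATL\ distinguishes but \PATL\ does not, you ask for models that are probabilistically bisimilar ``but whose labellings differ''; label-preserving bisimulation forces related states to carry the same labels, and once labellings genuinely differ, \PATL\ can typically probe the difference through its coalition and temporal operators, so the claim ``every \PATL\ formula evaluates identically'' is precisely the delicate property that has to be engineered and proved, not assumed. For the pair that \PATL\ distinguishes but Nat\PATL\ does not, you must show agreement of \emph{every} Nat\PATL\ formula uniformly in the bound $k$; you correctly flag this as the hardest step but only announce a simulation argument, and nothing in your sketch forces the ``perturbation'' to be invisible to guards of arbitrary regular complexity. (The natural lever here is imperfect information --- natural strategies observe only labels while \PATL\ strategies act on state histories --- rather than the infinite-memory requirements of \PATL\ that you invoke.) The paper does not build these pairs from scratch either: it discharges the obligation by adapting the concrete counterexamples of Propositions 8 and 9 of \cite{DBLP:conf/atal/BelardinelliJMM22}, converting their weighted deterministic CGSs into Boolean ones and deterministic coalition operators into probabilistic operators with threshold $\geq 1$. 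Your blind attempt would still have to construct and verify such counterexamples, so as it stands the proposal is a plan for a proof rather than a proof.
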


\begin{proof}[Sketch of proof]
The proof can be obtained by a slight adjustment of the proofs  regarding the %
distinguishing power of Quantified \SL with Natural Strategies (Nat\SLF) and  Quantified \SL (\SLF) with combinatorial strategies (Propositions 8 and 9 of 
\cite{DBLP:conf/atal/BelardinelliJMM22}). Since the logics considered there are not Boolean, notice that 
(i) the satisfaction value $1$ and $-1$ represent whether a formula is satisfied by a weighted CGS or is not; (ii)   the counterexamples considered in the proofs are  weighted deterministic CGS where the value of atomic propositions are restricted to $-1$ and $1$, and thus can be easily transformed in Boolean deterministic CGSs; and (iii) the counterexamples showing that the logics have incomparable distinguishing power are easily converted to \PATL and Nat\PATL formulas by changing the deterministic coalition operators into probabilistic coalition operators with probability $\geq1$.

Thus, we have that, for both memoryless and recall semantics:
\begin{itemize}
    \item $\textnormal{Nat}\PATL$ (resp, $\textnormal{Nat}\PATLs$) $\not\ldist \PATL $ (resp, \PATLs) %
    \item $\PATL$ (resp, $\PATLs$) $\not\ldist \textnormal{Nat}\PATL $ (resp, $\textnormal{Nat}\PATLs$) 
\end{itemize} 

That is,  $\textnormal{Nat}\PATL$ (resp, $\textnormal{Nat}\PATLs$)  and $\PATL$ (resp, $\PATLs$) have incomparable distinguishing power.

From the definitions of distinguishing power and expressive power, it is easy to see that, for two logical systems $\Lsys_1$ and $\Lsys_2$, $\Lsys_2\lexpr \Lsys_1$ implies $\Lsys_2\ldist \Lsys_1$. By transposition, we also get that $\Lsys_2\not\ldist \Lsys_1$ implies $\Lsys_2\not\lexpr \Lsys_1$. 
Thus,    $\textnormal{Nat}\PATL$ (resp, $\textnormal{Nat}\PATLs$)  and $\PATL$ (resp, $\PATLs$) have incomparable expressive power.
\end{proof}

\section{Conclusion}\label{sec:ccl}
In this work, we have defined multiple variations of \PATL with natural strategies, and studied their model-checking complexity. We have illustrated with multiple examples the relevance of the probabilistic setting, which can represent uncertainty in a very precise way, and the interest in natural strategies, that are both efficient and much closer to what a real-world agent is expected to manipulate. 

In terms of model checking, the \NP-completeness of Nat\PATL with deterministic strategies is promising, and shows we can capture POMDPs with bounded memory without any significant loss. While the \DNEXPTIME complexity for Nat\PATLs with deterministic strategies is high, we have shown a close lower bound, namely \DEXPTIME-hardness. With probabilistic strategies, the \EXPSPACE membership of Nat\PATL is quite similar to the result of~\cite{DBLP:journals/acta/BaierBGK12}, and the \TEXPSPACE membership of Nat\PATLs is also similar to~\cite{aminof2019probabilistic}. Since this exponential space blowup comes from the use of real arithmetic to encode probabilities, any improvement would likely come from the introduction of a totally new technique. Similarly, the doubly-exponential blowup between \PATL and \PATLs comes from the \DEXPTIME-completeness of \LTL model checking on MDPs. We also keep the \DEXPTIME-hardness from the deterministic case. To our knowledge, similar works~\cite{DBLP:journals/acta/BaierBGK12,aminof2019probabilistic}, do also not give different lower bounds between deterministic and probabilistic strategies. A possible approach would be to use a construction from POMDPs, more precisely either~\cite{DBLP:conf/uai/Junges0WQWK018}, showing
that synthesis on POMDPs with reachability objectives and bounded memory is \NP-complete for deterministic strategies and \ETR-complete
for probabilistic finite-memory strategies or~\cite{oliehoek2012decentralized}, showing that finding a policy maximizing a reward on a decentralized POMDPs with full memory is \NEXPTIME-complete). 
Our results on expressivity mean that there are properties of stochastic MAS with natural strategies that cannot be equivalently translated to properties based on combinatorial strategies, and vice versa.  

The proof of Theorem~\ref{thm:mc-sr} shows that we could extend natural strategies to \PSL, but it would be difficult to get a better result than our \TEXPSPACE complexity. Considering qualitative \PATLs or \PSL (\ie\ only thresholds $>0$ and $=1$) may yield a better complexity. For the quantitative setting, i.e., thresholds such as $>\frac{1}{2}$, techniques from the field of probabilistic model checking can be applied, e.g., graph analysis, bisimilation minimization, symbolic techniques, and partial-order reduction~\cite{katoen2016probabilistic}.  Another direction would be to consider \emph{epistemic operators}. Indeed, many applications involving agents with a reasonable way to strategize also have to take into account the knowledge and beliefs of these agents. As such, we would have to find a good epistemic framework such that natural strategies keep the desired balance between expressivity and complexity.
 
\section{Acknowledgements}
This project has received funding from the European Union’s Horizon 2020 research and innovation programme under the Marie Skłodowska Curie grant agreement No 101105549. This research has been supported by the PRIN project RIPER (No. 20203FFYLK), the PNRR MUR project PE0000013-FAIR, the InDAM 2023 project “Strategic Reasoning
in Mechanism Design”, and the DFG Project POMPOM (KA 1462/6-1).

%\clearpage
\bibliography{ref} 

\clearpage
\appendix
\begin{center}
{\Large Appendix}
\end{center}
\section{Probabilistic Strategy Logic}
\label{app:PSL}
Given a set of atomic proposition $\APf$, agents $\Ag$, and strategy variables $\Vvar$, we introduce Probabilistic Strategy Logic (\PSL), defined as in~\cite{aminof2019probabilistic}.

\begin{definition}[\PSL\ syntax]\label{def:PSL-syntax}
The syntax of \PSL  is defined by the grammar: 
\begin{align*}
	\varphi  &::= p \mid  {\varphi \lor  \varphi} \mid \neg \varphi \mid \exists x. \varphi \mid \tau \leq \tau
 \\
        \tau &::= c \mid \tau^{-1} \mid \tau - \tau \mid \tau + \tau \mid \tau \times \tau \mid \mathbb{P}_{\beta}(\psi)
        \\
        \psi  &::=\varphi \mid \neg \psi \mid {\psi \lor \psi} \mid \X \psi \mid \psi \until \psi
\end{align*}
\end{definition}

Formulas $\varphi$ are called history formulas, formulas $\tau$ are called arithmetic terms, and formula $\psi$ are called path formulas. In the proof of Theorem~\ref{thm:mc-sr}, we introduce a new history formula: $\varphi ::= \exists^{nat}_{k} x. \varphi$, stating that there exists a natural strategy with complexity $k$. We remark that apart from Boolean formulas over propositional variables, the only history formulas and arithmetic terms we use in this proof to encode Nat\ATLs are of the form $\exists^{nat}_{k}\sigma.\ \forall\mu.\  \mathbb{P}_{\coalition\rightarrow \sigma,\ \Ag\backslash\coalition\rightarrow \mu} (PSL(\varphi))\bowtie d$.

\begin{definition}[\PSL semantics]
\label{def:PSL-semantics}
We denote by $\Sigma$ the set of all strategies (natural or not) and use $\mu^{\profile{\strat_\coalition}}_\pos$ to range over the measures in $out_\coalition(\profile{\strat_\coalition},\pos)$. \PSL formulas are interpreted in a stochastic CGS $\System$, 
a valuation $\nu:\ \Vvar\to\Sigma$  
  and a path  $\iplay$. The semantics of history formulas is as follows:
\begingroup
\allowdisplaybreaks
\begin{align*}
 \System,\nu,\iplay &\models p & \text{ iff } & p \in \val(\iplay_0)\\
 \System,\nu,\iplay &\models \neg \varphi & \text{ iff } & \System,\nu,\iplay \not \models \varphi \\
 \System,\nu,\iplay &\models \varphi_1 \lor \varphi_2 & \text{ iff }&  
 \System,\nu,\iplay \models \varphi_1  \text{ or } \System,\nu,\iplay \models \varphi_2
 \\
 \System,\nu,\iplay &\models \exists x. \varphi & \text{ iff }& 
 \exists\sigma\in\Sigma.\ \System,\nu[x\mapsto \sigma],\iplay \models \varphi \\
 \System,\nu,\iplay &\models \tau_1 \leq \tau_2 & \text{ iff }& 
 val{\nu,\iplay}(\tau_1)\leq val_{\nu,\iplay}(\tau_2) \\
\end{align*} 
\endgroup
where
\begin{align*}
 &val{\nu,\iplay}(c) = c \text{ and } val{\nu,\iplay}(\tau^{-1}) = val{\nu,\iplay}(\tau))^{-1} \\
 &val{\nu,\iplay}(\tau \oplus \tau') =  val{\nu,\iplay}(\tau)  \oplus val{\nu,\iplay}(\tau') \text{ for } \oplus\in \{-,+,\times\} \\
 &val{\nu,\iplay}(\mathbb{P}_{\beta}(\psi)) = \mu^{\profile{\nu\circ\beta}}_{\iplay_0}(\{\pi\ :\ \System,\nu,\iplay\models\psi\})
\end{align*} 
The semantics of path formulas is as follows:
\begingroup
\allowdisplaybreaks
\begin{align*}
 \System,\nu,\iplay &\models \neg\varphi & \text{ iff } & \System,\nu,\iplay \not\models \varphi\\ 
 \System,\nu,\iplay &\models \psi_1 \lor \psi_2 & \text{ iff }&  
 \System,\nu,\iplay \models \psi_1  \text{ or } \System,\nu,\iplay \models \psi_2\\
 \System,\nu,\iplay &\models \X \psi & \text{ iff } & \System,\nu,\iplay_{\geq 1} \models \psi \\
\System, \iplay  & \models \psi_1 \until \psi_2 & \text{ iff } &  \exists k \geq 0 \text{ s.t. } \System,\nu,\iplay_{\geq k} \models \psi_2 \text{ and } 
\\ & & &
\forall j \in [0,k).\, \,  \System,\nu,\iplay_{\geq j}\models \psi_1
\end{align*} 
\endgroup

We also define the semantics of the natural strategy existential quantifier:
\begin{align*}
 \System,\nu,\iplay &\models \varphi_1 \lor \varphi_2 & \text{ iff }& 
 \exists\sigma\in\Sigma,\ \sigma \text{ natural },\ \System,\nu[x\mapsto \sigma],\iplay \models \varphi \\
\end{align*} 

\end{definition}

\end{document}